\newtheorem{theorem}{Theorem}
\newcommand{\Tr}{\text{Tr}}
\newcommand{\ack}{\subsection*{\normalsize \sf \textbf{Acknowledgment}}}
\newcommand{\bit}{\begin{itemize}}
	\newcommand{\eit}{\end{itemize}\par\noindent}
\newcommand{\ben}{\begin{enumerate}}
	\newcommand{\een}{\end{enumerate}\par\noindent}
\newcommand{\beq}{\begin{equation}}
\newcommand{\eeq}{\end{equation}\par\noindent}
\newcommand{\beqa}{\begin{eqnarray*}}
	\newcommand{\eeqa}{\end{eqnarray*}\par\noindent}
\newcommand{\beqn}{\begin{eqnarray}}
\newcommand{\eeqn}{\end{eqnarray}\par\noindent}
\newtheorem{proposition}{Proposition}
\begin{document}
	\title{Contextuality in Entanglement-assisted One-shot Classical Communication}
	
	\author{Shiv Akshar Yadavalli}
	\email{sy215@duke.edu}
	\affiliation{Department of Physics, Duke University, Durham, North Carolina, USA 27708}
	
	\author{Ravi Kunjwal}
	\email{ravi.kunjwal@ulb.be}
	\affiliation{Centre for Quantum Information and Communication, Ecole polytechnique de Bruxelles,
		CP 165, Universit\'e libre de Bruxelles, 1050 Brussels, Belgium}
	
	\date{\today}                                           
	
	\begin{abstract}
		We consider the problem of entanglement-assisted one-shot classical communication. In the zero-error regime, entanglement can increase the one-shot zero-error capacity of a family of classical channels following the strategy of \href{ https://journals.aps.org/prl/abstract/10.1103/PhysRevLett.104.230503}{Cubitt {\em et al.}, Phys.~Rev.~Lett.~104, 230503 (2010)}. This strategy uses the Kochen-Specker theorem which is applicable only to projective measurements. As such, in the regime of noisy states and/or measurements, this strategy cannot increase the capacity. To accommodate generically noisy situations, we examine the one-shot success probability of sending a fixed number of classical messages. We show that preparation contextuality powers the quantum advantage in this task, increasing the one-shot success probability beyond its classical maximum. Our treatment extends beyond Cubitt {\em et al.} and includes, for example, the experimentally implemented protocol of 
		\href{https://link.aps.org/doi/10.1103/PhysRevLett.106.110505}{Prevedel {\em et al.}, Phys. Rev. Lett. 106, 110505 (2011)}. We then show a mapping between this communication task and a corresponding nonlocal game. This mapping generalizes the connection with pseudotelepathy games previously noted in the zero-error case. Finally, after motivating a constraint we term {\em context-independent guessing}, we show that contextuality witnessed by noise-robust noncontextuality inequalities obtained in \href{https://doi.org/10.22331/q-2020-01-10-219}{R. Kunjwal, Quantum 4, 219 (2020)}, is sufficient for enhancing the one-shot success probability. This provides an operational meaning to these inequalities and the associated hypergraph invariant, the weighted max-predictability, introduced in \href{https://doi.org/10.22331/q-2019-09-09-184}{R. Kunjwal, Quantum 3, 184 (2019)}. Our results show that the task of entanglement-assisted one-shot classical communication provides a fertile playground to study the interplay of the Kochen-Specker theorem, Spekkens contextuality, and Bell nonlocality.
	\end{abstract}
	
	\maketitle
\tableofcontents
\section{Introduction}\label{sec1}
The problem of identifying the resources responsible for a quantum advantage over classical strategies in quantum information and computation is key to unlocking the potential of quantum technologies. Often, such resources are taken to be theory-dependent features like entanglement, coherence, incompatibility, or perhaps the exponential scaling of Hilbert space dimension with the number of quantum systems at hand. The nonclassicality witnessed by Bell violations \cite{Bell64,CHSH} makes it possible to identify a source of quantum advantage that can be assessed in a theory-independent fashion, relying only on empirical data rather than internal features of the theory that generated the data. In contrast to the case of Bell nonlocality, Kochen-Specker (KS) contextuality \cite{KS67}, a notion of nonclassicality mathematically similar to Bell nonlocality, hasn't been as widely adopted as a theory-independent witness of quantum advantage. This is despite the existence of theoretical results on its relevance for quantum information and computation \cite{RW04, BBT05, CLM10, HWV14}. One reason for this is that it isn't robust to noise, unlike Bell nonlocality, making its experimental testability a matter of controversy \cite{BK04, Winter14}

Recently, much work has been devoted to making contextuality a notion of nonclassicality that relies on empirical data without making assumptions about the representation of measurements (concerning, in particular, their sharpness \cite{Kunjwal19, Cabello17, CY14}) in the theory generating the data \cite{Spekkens05, MPK16}.\footnote{Experimental testability of generalized contextuality -- in particular, the need for tomographic completeness in order to verify operational equivalences -- has been addressed in several recent papers \cite{MPK16, PdRM19, MPR21} and we refer the interested reader to these for a discussion of such issues and how they are handled in the framework. Note, however, that if one assumes a quantum description of any experiment (as opposed to a general probabilistic theory -- GPT -- description), then verification of operational equivalences via tomography is straightforward as there is no ambiguity about the dimension of a well-characterized system's Hilbert space, hence about the minimum number of tomographically complete preparations/measurements needed to establish operational equivalences. This move -- assuming a quantum description -- is the natural one when considering applications of contextuality in quantum information as opposed to its foundational implications for quantum theory (which necessitates a broader framework like GPTs). On the other hand, even if one does assume a quantum description, Kochen-Specker contextuality cannot do justice to the case where this description involves non-projective measurements for reasons that have been extensively discussed elsewhere \cite{Spekkens05, KS15, KS18, Kunjwal19}.}
This noise-robust notion of contextuality due to Spekkens \cite{Spekkens05} has been shown to underlie several quantum information tasks such as parity-oblivious multiplexing, quantum random access codes, state discrimination, communication complexity, anomalous weak values, and state-dependent cloning \cite{SBK09, CKKS16, SS18, SC19, SHP19, KLP19, LS20}. These applications of Spekkens contextuality, though, have no counterpart in terms of KS-contextuality, leaving a gap in our understanding of how advantages from KS-contextuality can be turned into noise-robust advantages premised on Spekkens contextuality. This is in line with the spirit of Ref.~\cite{KS15}, where the first noise-robust noncontextuality inequality, inspired by the Kochen-Specker theorem, was derived. Since the approach to noise-robust noncontextuality inequalities generalizes the KS paradigm by removing restrictions like projective measurements \cite{KS15}, it behooves us to ask if, and in what precise form, the advantages that derive from KS-contextuality persist when one considers noise-robust contextuality \`a la Spekkens \cite{Spekkens05, Kunjwal16}. In this paper, we take the first steps in this research program using tools from previously proposed hypergraph frameworks \cite{Kunjwal19, Kunjwal20}.

We consider the problem of one-shot classical communication where it has been shown that, assisted by entanglement, KS-contextuality provides an increase in the one-shot zero-error capacity of classical channels based on the KS theorem \cite{KS67, CLM10}. We study a relaxation of this problem to one of enhancing the one-shot success probability of sending a fixed number of classical messages assisted by entanglement.
Previous work \cite{PLM11,HMS13} has studied the one-shot success probability in the case of classical channels (unrelated to the KS theorem \cite{KS67}) which do not admit an enhancement of zero-error channel capacity \`a la Cubitt {\em et al.}~\cite{CLM10}. In contrast, we here study the one-shot success probability for the general case that includes, in particular, channels which {\em do} admit an enhancement of the one-shot zero-error channel capacity. A schematic of the task of entanglement-assisted one-shot classical communication is outlined in Fig.~\ref{schematic}.
\begin{figure}[htb!]
	\includegraphics[scale=0.2]{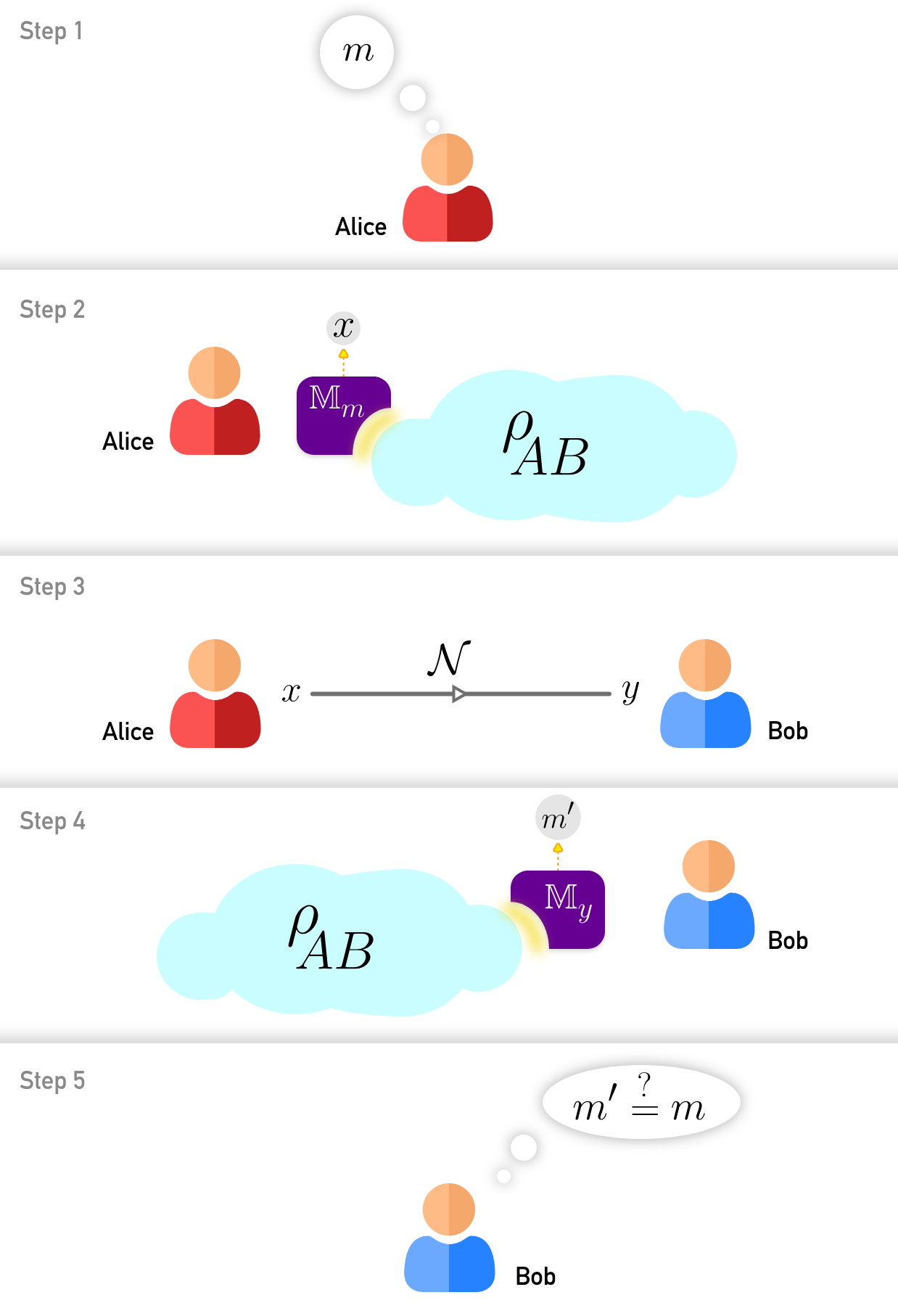}
	\caption{A schematic of the general protocol described in detail further on in Section \ref{sec3_1}. Alice and Bob are connected via a classical channel $\mathcal{N}$ and they share an entangled state $\rho_{AB}$. Once Alice decides to send a message $m$ in Step $1$, she encodes this message in her measurement choice in Step $2$ and obtains outcome $x$. In Step $3$, this outcome serves as the channel input  and Bob obtains channel output $y$ with probability $\mathcal{N}(y|x)$. Based on $y$, in Step $4$, Bob measures his quantum system and, after some classical post-processing, obtains his guess $m'$ for Alice's message $m$. They succeed if $m'=m$ (Step $5$).}
	\label{schematic}
\end{figure}

Our results can be summarized as follows:
\begin{itemize}
	\item  We show that preparation noncontextuality \cite{Spekkens05} relative to Bob's share of the system characterizes the classical upper bound on the task of enhancing the one-shot success probability of a classical channel assisted by entanglement.\footnote{This does not require that the channel be based on the KS theorem in the sense of Cubitt {\em et al.}~\cite{CLM10}.}
	Hence, preparation contextuality drives the quantum advantage in the general task. 
	
	\item We then show a mapping between the one-shot communication task and a corresponding nonlocal game: preparation contextuality powers an advantage in the first task if and only if Bell nonlocality powers an advantage in the second.\footnote{Note that while the first task, by definition, requires communication via a classical channel---hence timelike separation, where the channel output (to the receiver) is within the future lightcone of the channel input (from the sender)---the second task, by definition, forbids all communication by requiring spacelike separation, where the sender and receiver are causally disconnected during each run of the nonlocal game. However, in both situations---timelike or spacelike separation---the shared common-cause resource doesn't result in signalling, i.e., we assume that this resource is described by a non-signalling theory (\textit{e.g.}, shared entanglement in quantum theory) \cite{Barrett07}.}  This generalizes the connection between entanglement-assisted one-shot zero-error capacity and nonlocal (pseudotelepathy) games noted in Ref.~\cite{CLM10}.
	
	\item We motivate a constraint on the communication task that we term {\em context-independent guessing} in a situation where the receiver (Bob) has no knowledge of (or doesn't trust) the exact channel probabilities but knows only (or trusts only) the channel hypergraph. We then prove that, for some classical channels (including the one studied in Ref.~\cite{CLM10}), the contextuality witnessed by a hypergraph invariant -- the weighted max-predictability -- implies an enhancement of the one-shot success probability in the communication task. This makes direct connection with the formalism of Ref.~\cite{Kunjwal20}, where weighted max-predictability provides an upper bound on the strength of source-measurement correlations under the assumption of noncontextuality. We thus provide an operational meaning to the violation of noise-robust noncontextuality inequalities in Ref.~\cite{Kunjwal20}: namely, such violations power the enhancement of one-shot success probability of classical communication assisted by entanglement.
\end{itemize}
 
The structure of the paper is as follows: In Section \ref{sec2}, we define some preliminary notions from the theory of one-shot classical communication as well as contextuality. In Section \ref{sec3}, we describe the general protocol for entanglement-assisted one-shot classical communication that provides a unified description of protocols such as those of Refs.~\cite{CLM10} and \cite{PLM11}. In Section \ref{sec4}, we discuss the resources that play a role in the quantum advantage in the communication task, including preparation contextuality and Bell nonlocality. In Section \ref{sec5}, we dive deep into the connection between the role of preparation contextuality in our communication task and the role of Bell nonlocality in a corresponding nonlocal game, proving some general relationships between them. In Section \ref{sec6}, we look at the problem of one-shot communication of a single bit through classical channels with complete confusability graphs, in particular the classical channel of Ref.~\cite{PLM11}. In Section \ref{sec7}, we study the case of classical channels based on the KS theorem \`a la Ref.~\cite{CLM10}. We conclude with a discussion in Section \ref{sec8}, mentioning some open problems and opportunities for future work.

\section{Preliminaries}\label{sec2}

\subsection{Classical Channels}\label{sec2_1}
Consider a discrete and memoryless classical channel $\mathcal{N}$ (\textit{e.g.}, Fig.~\ref{prevedelchannelhypergraph}). Let $X$ denote the set of input symbols of $\mathcal{N}$ and $Y$ denote the set of output symbols so that  $\{\mathcal{N}(y|x)\}_{x\in X,y\in Y}$ denotes the channel probabilities satisfying: $\mathcal{N}(y|x)\geq 0$ for all $x\in X, y\in Y$ and $\sum_{y\in Y}\mathcal{N}(y|x)=1$ for all $x\in X$. Further, we denote by $Y_x\subseteq Y$ the set of output symbols that have a non-zero probability of occurrence when the input symbol is $x\in X$, i.e., the {\em support} of $x$, given by $Y_x\equiv \{y\in Y|\mathcal{N}(y|x)>0\}$ for $x\in X$.
Similarly, $X_y\subseteq X$ denotes the set of input symbols that yield a non-zero probability of occurrence for the output symbol $y\in Y$, i.e., the {\em support} of $y$, given by $X_y\equiv \{x\in X|\mathcal{N}(y|x)>0\}$ for all $y\in Y$.

To the classical channel $\mathcal{N}$, we associate the channel hypergraph $H(\mathcal{N})$: vertices of $H(\mathcal{N})$ denote the input symbols $x\in X$ and hyperedges denote the output symbols $y\in Y$, such that each hyperedge representing $y\in Y$ contains the input symbols in $X_y$. Any two input symbols $x,x'\in X$ are said to be {\em confusable} when they share a hyperedge in $H(\mathcal{N})$, i.e., $Y_x\cap Y_{x'}\neq\varnothing$. The {\em confusability graph} $G(\mathcal{N})$ of the channel is given by the orthogonality graph of $H(\mathcal{N})$, i.e., its vertices are given by $X$ and any two vertices in $X$ are connected by an edge if and only if they are confusable.

Given the classical channel $\mathcal{N}$, Alice and Bob choose an {\em encoding} of the messages (say, $[q]\equiv \{m\}_{m=1}^q$) that Alice (the sender) wants to send to Bob (the receiver) through the channel. An {\em encoding} is a collection of mutually disjoint subsets of $X$.

More concretely, $\{X^{(m)}\}_{m=1}^q$ (where $X^{(m)}\subseteq X$ for all $m\in[q]$) is an encoding of the set of $q$ messages in $[q]$ if and only if $X^{(m)} \cap X^{(m')}=\varnothing$ for all distinct $m,m’\in [q]$.\footnote{Hence, given any input symbol $x^{(m)}\in X^{(m)}$, the message $m$ can be uniquely inferred from $x^{(m)}$ since $X^{(m)}\cap X^{(m')}=\varnothing$ for all $m'\neq m$. In the interest of efficiency of encoding, we shall only consider encodings where each $X^{(m)}$ $(m \in [q])$ is a clique in $G(\mathcal{N})$.\footnote{This means that every pair of input symbols in a subset is confusable. An encoding which doesn't satisfy this property would be inefficient: it would fail to fully exploit the non-confusability properties of the channel in sending messages.}
}

A {\em zero-error code} is a set of input symbols $X_0\subseteq X$ that are mutually non-confusable, i.e., no two symbols in this set can map to the same output symbol when fed into the channel $\mathcal{N}$. Hence, an encoding $\{X^{(m)}\}_{m=1}^q$ of the messages in $[q]$ is said to admit a zero-error code if and only if there exists a non-empty set
$$X_0\subseteq\{x|x\in X^{(m)},m\in[q]\}$$ such that $\forall x,x'\in X_0$, $Y_x\cap Y_{x'}=\varnothing$. The {\em one-shot zero-error capacity} of a classical channel is the number of messages that can be sent without error with one use of the channel, i.e.,
the cardinality of the largest zero-error code it admits. This is given by the independence number $\alpha(G(\mathcal{N}))$ of $G(\mathcal{N})$, namely, the cardinality of the largest set of vertices that share no edge in $G(\mathcal{N})$. Note that $\mathcal{N}$ does {\em not} admit a nontrivial zero-error code (i.e., with $q\geq 2$) if and only if $G(\mathcal{N})$ is a complete graph, i.e., $\alpha(G(\mathcal{N}))=1$. Further, for any encoding $\{X^{(m)}\}_{m=1}^q$ that does admit a zero-error code using $\mathcal{N}$, we necessarily have $q\leq \alpha(G(\mathcal{N}))$. We illustrate the above notions in Fig.~\ref{prevedelchannelhypergraph} with a simple example of a classical channel that was studied in Ref.~\cite{PLM11}.
\begin{figure}[htb!]
	\centering
	\includegraphics[scale=0.2]{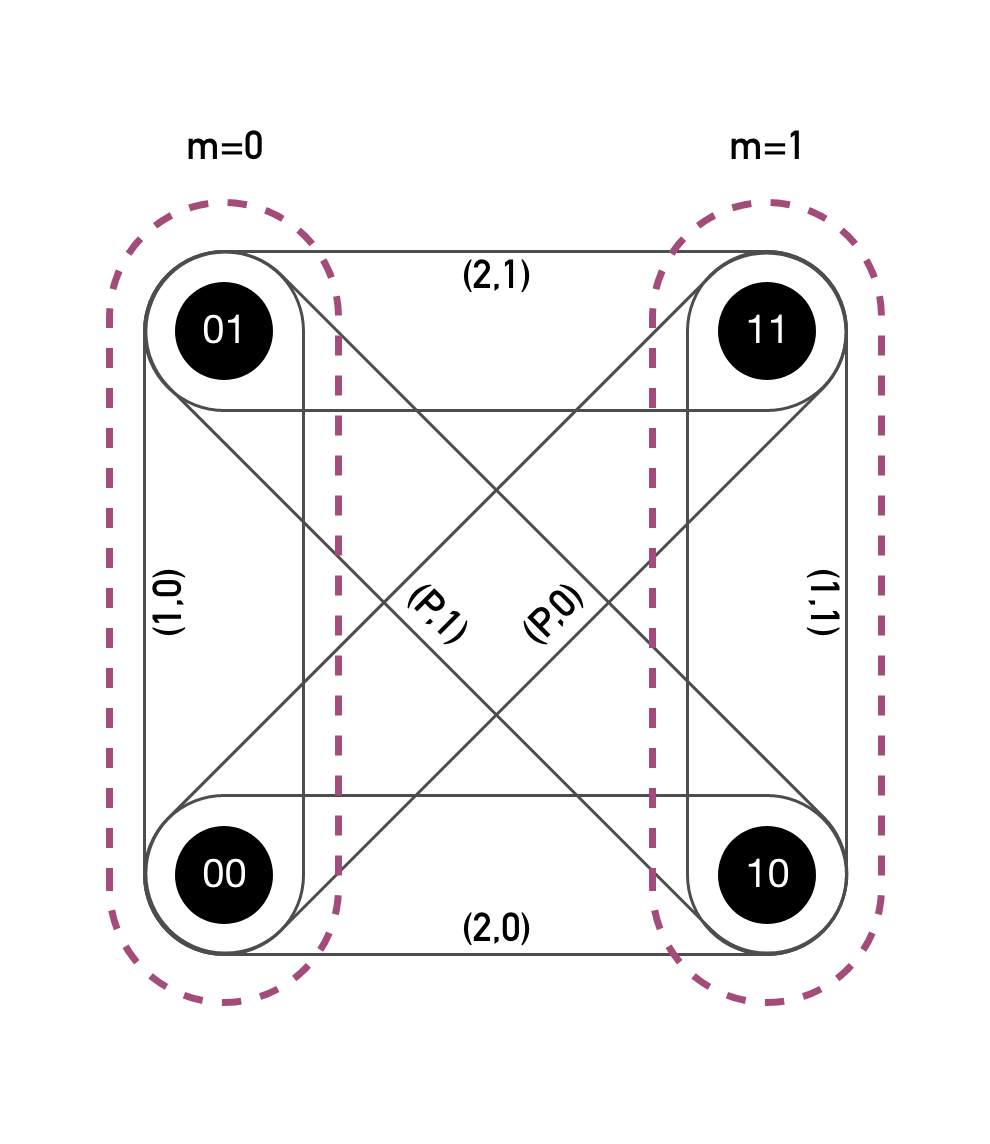}
	\caption{The channel hypergraph of the classical channel studied in Ref.~\cite{PLM11} with the encoding indicated by dashed edges. Here $X=\{00,01,10,11\}$ and $Y=\{(1,0),(1,1),(2,0),(2,1),(P,0),(P,1)\}$. The support of input $01$, for example, is given by $Y_{01}=\{(1,0),(2,1),(P,1)\}\subseteq Y$ and the support of output $(P,0)$, for example, is given by $X_{(P,0)}=\{00,11\}\subseteq X$. The channel probabilities are given by $\mathcal{N}(y|x)=1/3$ for all $y\in Y_x$ and $0$ otherwise for all $x\in X$. The one-bit encoding of $m\in {\rm \bf Msg}=\{0,1\}$ is given by $X^{(0)}=\{00,01\}\subseteq X$ and $X^{(1)}=\{10,11\}\subseteq X$. This encoding $\{X^{(0)}, X^{(1)}\}$ does not admit a zero-error code. In fact, a zero-error code doesn't exist for this channel hypergraph since it does not admit even a pair of inputs that are mutually non-confusable, i.e., do not share an edge.}
	\label{prevedelchannelhypergraph} 
\end{figure}

In this paper we will consider the {\em one-shot success probability} for sending messages in a given encoding $\{X^{(m)}\}_{m=1}^q$ of $q$ messages, where $q>\alpha(G(\mathcal{N}))$. Such an encoding does not admit a zero-error code, although the classical channel $\mathcal{N}$ may admit (smaller) encodings with zero-error codes, i.e., it may be that $\alpha(G(\mathcal{N}))>1$. Of particular interest in this paper is a family of classical channels that we term {\em Kochen-Specker (KS) channels}. A KS channel is defined as any classical channel whose channel hypergraph satisfies the property of {\em KS-uncolourability} \cite{Kunjwal20}\cite{AFL15}, i.e., it is impossible to assign a $\{0,1\}$-valuation to the vertices such that the assignments in each hyperedge add up to 1.\footnote{What this means in terms of channel confusability is that it is impossible to pick a set of vertices with one vertex from each hyperedge such that all the vertices in the set are mutually non-confusable.} A KS-uncolourable hypergraph is said to admit a {\em KS set} if it is possible to associate its vertices to projectors on a Hilbert space and hyperedges to projector-valued measures (PVMs), i.e., the projectors in any hyperedge are mutually orthogonal and sum up to the identity. Any set of such projectors for a KS-uncolourable hypergraph is called a {\em KS set}. We will consider two paradigmatic examples of KS channels, drawing upon Refs.~\cite{CLM10,PLM11}, only one of which \cite{CLM10} admits a KS set.

\subsection{Contextuality}\label{sec2_2}

We will be interested in the twin notions of preparation and measurement noncontextuality following Spekkens \cite{Spekkens05}. In a general operational theory, a preparation procedure consists of a source setting, $\mathbb{S}$, that prepares an ensemble of possible preparations indexed by source outcome $\mathbb{s}\in V_{\mathbb{S}}$, each with probability $p(\mathbb{s}|\mathbb{S})$. We denote the ensemble for source setting $\mathbb{S}$ by $\{(p(\mathbb{s}|\mathbb{S}),[\mathbb{s}|\mathbb{S}])\}_{\mathbb{s}\in V_{\mathbb{S}}}$. Formally, we refer to the (abstract) device implementing this preparation procedure as a multisource. A measurement procedure consists of a measurement setting $\mathbb{M}$ that yields one of possible outcomes indexed by $\mathbb{m}\in V_{\mathbb{M}}$ with probability $p(\mathbb{m}|\mathbb{M},\mathbb{S},\mathbb{s})$ when a system prepared according to $[\mathbb{s}|\mathbb{S}]$ is input to the measurement device. Formally, we refer to the (abstract) device implementing this measurement procedures as a multimeter. Together, the combination of source setting $\mathbb{S}$ and a measurement setting $\mathbb{M}$ yields conditional joint probability distribution given by $p(\mathbb{m},\mathbb{s}|\mathbb{M},\mathbb{S})=p(\mathbb{m}|\mathbb{M},\mathbb{S},\mathbb{s})p(\mathbb{s}|\mathbb{S})$. (See Fig.~\ref{PMsetup}.)

\begin{figure}
	\centering
	\includegraphics[scale=0.37]{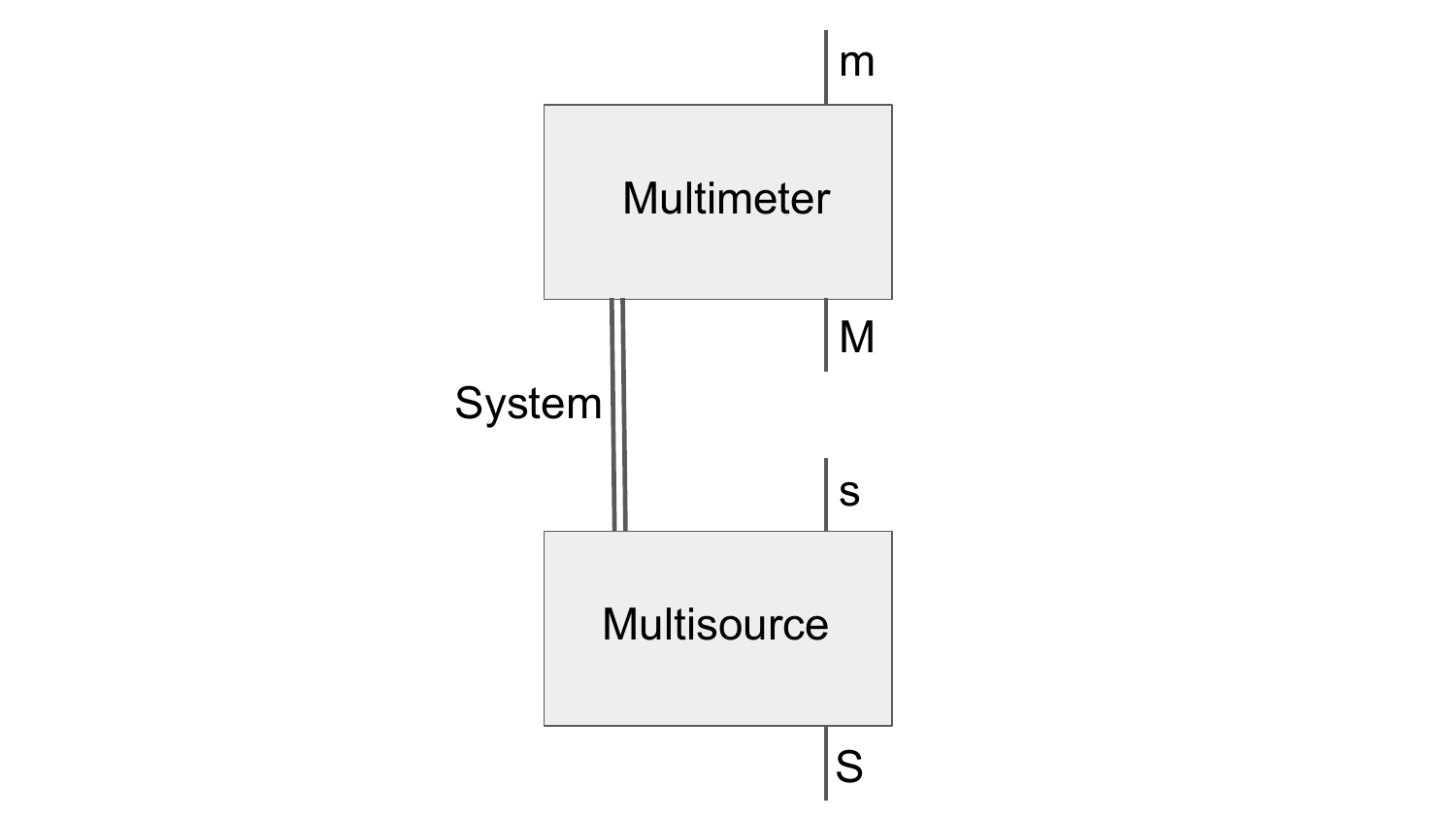}
	\caption{A generic prepare-and-measure setup. Classical inputs and outputs are indicated by single lines and the quantum/postquantum system passing from the multisource to the multimeter is indicated by double lines.}
	\label{PMsetup}
\end{figure}

Two source settings $\mathbb{S}$ and $\mathbb{S'}$ are said to be operationally equivalent if 
\begin{align}\label{prepop}
&\forall [\mathbb{m}|\mathbb{M}]:\nonumber\\
&\sum_{\mathbb{s}}p(\mathbb{m},\mathbb{s}|\mathbb{M},\mathbb{S})=\sum_{\mathbb{s'}}p(\mathbb{m},\mathbb{s'}|\mathbb{M},\mathbb{S'}).
\end{align}
We will denote this operational equivalence by $\mathbb{S}\simeq \mathbb{S'}$. Two measurement events $[\mathbb{m}|\mathbb{M}]$ and $[\mathbb{m'}|\mathbb{M'}]$ are said to be operationally equivalent if 
\begin{align}
\forall [\mathbb{s}|\mathbb{S}]: p(\mathbb{m},\mathbb{s}|\mathbb{M},\mathbb{S})=p(\mathbb{m'},\mathbb{s}|\mathbb{M'},\mathbb{S}),
\end{align}
and we denote this by $[\mathbb{m}|\mathbb{M}]\simeq[\mathbb{m'}|\mathbb{M'}]$.

In keeping with the original definition of generalized noncontextuality \cite{Spekkens05}, and its further development in subsequent work \cite{Kunjwal19}, the notion of operational equivalence---for preparations and for measurements---is evaluated relative to \textit{all} possible measurement and preparation events (respectively) in the operational theory governing the experiment of Fig.~\ref{PMsetup}.

An ontological model of the operational theory consists of ontic states $\lambda\in\Lambda$ that are sampled by a preparation $[\mathbb{s}|\mathbb{S}]$ according to some probability distribution $\mu(\lambda|\mathbb{S},\mathbb{s})$, so that $\mu(\lambda,\mathbb{s}|\mathbb{S})=\mu(\lambda|\mathbb{S},\mathbb{s})p(\mathbb{s}|\mathbb{S})$. Any measurement device responds to the input of an ontic state $\lambda$ according to some probability, $\xi(\mathbb{m}|\mathbb{M},\lambda)$, called a response function. The ontological model reproduces the operational statistics as follows:
\begin{align}
p(\mathbb{m},\mathbb{s}|\mathbb{M},\mathbb{S})=\sum_{\lambda}\xi(\mathbb{m}|\mathbb{M},\lambda)\mu(\lambda,\mathbb{s}|\mathbb{S}).
\end{align} 

The assumption of preparation noncontextuality entails the following implication:
\begin{align}
\mathbb{S}\simeq \mathbb{S'}&
\Rightarrow
\sum_{\mathbb{s}}\mu(\lambda,\mathbb{s}|\mathbb{S})=\sum_{\mathbb{s'}}\mu(\lambda,\mathbb{s'}|\mathbb{S'}), \forall \lambda\in \Lambda,\nonumber\\
&\textrm{i.e., }\mu(\lambda|\mathbb{S})=\mu(\lambda|\mathbb{S'}),\forall \lambda\in \Lambda.
\end{align}

The assumption of measurement noncontextuality entails the following implication:
\begin{align}
[\mathbb{m}|\mathbb{M}]\simeq [\mathbb{m'}|\mathbb{M'}]\Rightarrow \xi(\mathbb{m}|\mathbb{M},\lambda)=\xi(\mathbb{m'}|\mathbb{M'},\lambda), \forall \lambda\in\Lambda.
\end{align}

A failure of the joint assumption of preparation and measurement noncontextuality is then said to be a demonstration of contextuality.\footnote{The assumption of noncontextuality is an instance of the Leibnizian idea of the ontological identity of operational indiscernibles \cite{Spekkens19}: operationally equivalent experimental procedures admit ontologically equivalent representations under this assumption.}
\section{General protocol for one-shot classical communication assisted by a common-cause resource}\label{sec3}
\subsection{One-shot success probability}\label{sec3_1}
We want to consider the situation where Alice and Bob have access to a shared common-cause resource such as quantum entanglement (but possibly also more general nonclassical common-cause resources \cite{WSS20,Barrett07}) which they can use to enhance the one-shot success probability of sending messages through $\mathcal{N}$. This situation of entanglement-assisted one-shot classical communication has been studied previously \cite{CLM10, PLM11, HMS13}. We will below take the nonclassical common-cause resource to be quantum entanglement for ease of presentation but the ideas extend to post-quantum theories -- corresponding, in general, to some convex subset of the set of no-signalling correlations -- in a straightforward way.

The task at hand is the following: Alice and Bob share a classical channel $\mathcal{N}$ together with some bipartite quantum system in an entangled state $\rho_{\rm AB}$. Alice wants to send messages from the set ${\rm \bf Msg}$ according to a probability distribution $\{p(m)\}_{m\in \rm \bf \bf Msg}$. To do this for some encoding $\{X^{(m)}\}_{m\in \rm \bf Msg}$ defined on $\mathcal{N}$, Alice implements a POVM $\mathbb{M}^{\rm (A)}_m\equiv\{E^{(m)}_x\}_{x\in X}$ on her part, $\rho_A$, of the shared state $\rho_{AB}$ given by $\rho_A=\Tr_B\rho_{AB}$. She obtains outcome $x\in X$ with probability $p(x|m)=\Tr(E^{(m)}_x\rho_A)$. She inputs $x$ into the classical channel which yields output $y\in Y$ with probability $\mathcal{N}(y|x)$. Using the output $y$ and his part of $\rho_{\rm AB}$, Bob needs to figure out a strategy that will let him infer Alice's choice of measurement (hence the message $m$) with the maximum success probability, i.e., for every set of $\{m,x,y\}$ we want to maximize the probability $p(m'=m|y,m,x)$ that Bob's guess for the message, denoted $m'$, agrees with the message Alice sent.

On receiving the channel output $y$, Bob implements some measurements, say $\{\mathbb{M}_{v}\equiv\{E^{(v)}_{z}\}_{z\in O}\}_{v\in V}$, on his part of the shared quantum system according to some probability distribution, say $\{p(v|y)\}_{v\in V}$, that depends on $y$. The measurement outcome $z\in O$ occurs with probability $p(z|v,m,x)=\Tr (E^{(v)}_{z}\rho^{(m)}_x)$ for the choice of POVM $\mathbb{M}_v$. Here $\rho^{(m)}_x$ is the state on Bob's side that Alice ``steers" to when she obtains outcome $x$ for measurement $m$ with probability $p(x|m)$. Overall, Bob implements the effective POVM $\mathbb{M}'_y\equiv \sum_vp(v|y)\mathbb{M}_v$ given by the set of POVM elements $\{\sum_vp(v|y)E^{(v)}_z\}_{z\in O}$, where outcome $z$ occurs with probability $p(z|y,m,x)\equiv \sum_v p(v|y)p(z|v,m,x)$. Bob's guess for the message, $m'$, will then be a function of $z,y$, i.e., $m'=g(z,y)$, where $g$ is a function from $O\times Y$ to ${\rm \bf Msg}$. We have a successful decoding when $m'=m$. This effectively defines the overall measurement $\mathbb{M}^{\rm (B)}_y$ with outcomes $m'\in {\rm \bf Msg}$, obtained by classical post-processing of the outcomes $z\in O$ of $\mathbb{M}'_y$, according to $p(m'|y,m,x)\equiv \sum_z\delta_{m',g(z,y)}p(z|y,m,x)$. The probability of a correct guess, $m'=m$, is then $p(m'=m|y,m,x)=\sum_z\delta_{m,g(z,y)}p(z|y,m,x)$.

The overall success probability is thus given by
\begin{align}
S=&\sum_mp(m)\sum_xp(x|m)\sum_y\mathcal{N}(y|x)\nonumber\\
&\sum_{m'}p(m'|y,m,x)\delta_{m,m'}.
\end{align}

In situations where Bob can do this perfectly (the zero-error regime), i.e., $S=1$, we obtain an exact simulation of a noiseless classical channel ${\rm \bf Id}$ with the same input and output alphabet ${\rm \bf Msg}$ with channel probabilities ${\rm\bf Id}(m'|m)=\delta_{m,m'}$ by using the noisy classical channel $\mathcal{N}$ and (potentially) some shared common-cause resource. A schematic of the protocol is provided in Fig.~\ref{schematic}.

\subsection{KS channels that admit KS sets: Cubitt {\em et al.} strategy}\label{sec3_2}
For some KS channels admitting KS sets (e.g., Fig.~\ref{kschannel}), the Cubitt {\em et al.}~strategy \cite{CLM10} corresponds to choosing ${\rm \bf Msg}\equiv\{m\}_{m=1}^q$, $V\equiv Y$, $p(v|y)\equiv \delta_{v,y}$, $O\equiv X$ for all $v\in V$, (hence) $z\equiv x'\in X$, and $g(z,y)=g(x')=m'$, where $m'$ is the unique message such that $x'\in X^{(m')}$. Thus, $p(m'|y,m,x)=\sum_{x'} \delta_{m',g(x')}p(x'|y,m,x)$. With these choices, we have

\begin{align}
S=&\sum_mp(m)\sum_xp(x|m)\sum_y\mathcal{N}(y|x)\nonumber\\
&\sum_{m'}p(m'|y,m,x)\delta_{m,m'}\nonumber\\
=&\sum_mp(m)\sum_xp(x|m)\sum_y\mathcal{N}(y|x)\nonumber\\
&\sum_{x'\in X^{(m)}}p(x'|y,m,x).
\end{align}

As shown in Ref.~\cite{CLM10}, such channels admit an enhancement of their one-shot zero-error classical capacity in the presence of shared entanglement, i.e., their one-shot success probability $S=1$ for some $q>\alpha(G(\mathcal{N}))$. We will discuss these channels in more detail in Section \ref{sec7}.

\subsection{KS channels that do not admit KS sets}\label{sec3_3}
In general, a KS channel may not admit any KS set and in that case the strategy of Cubitt {\em et al.} \cite{CLM10} does not apply.  An example of a KS channel that does not admit a KS set was studied for its one-shot success probability by Prevedel {\em et al.} \cite{PLM11}. This example fits within the general protocol described in Section \ref{sec3_1} and will be discussed in Section \ref{prevedelprotocol}.

\section{Quantum advantage in one-shot classical communication}\label{sec4}

\subsection{Classical one-shot success probability}
Classically, the only information that Bob has about Alice's measurement and its outcome, i.e., $m$ and $x$, is mediated by the output of the channel, $y$. We therefore have $p(m'|y,m,x)=p(m'|y)$, i.e., $m'$ is conditionally independent of $m$ and $x$, given $y$. Shared randomness does not help because it only amounts to a convex mixture of deterministic classical strategies (indexed by, say, $c\in\mathcal{C}$) according to some probability distribution $\{p(c)\}_{c\in\mathcal{C}}$ and no such convex mixture can do better than the best deterministic classical strategy. The classical one-shot success probability is therefore given by
\begin{align}
&S_{\rm Cl}\nonumber\\
=&\sum_{c\in\mathcal{C}}p(c)\Bigg(\sum_mp(m)\sum_xp(x|m,c)\sum_y\mathcal{N}(y|x)\nonumber\\
&\sum_{m'}p(m'|y,c)\delta_{m,m'}\Bigg),
\end{align}
with the tight upper bound 
\begin{align}
&S_{\rm Cl}^{\max}\equiv\max_{c\in\mathcal{C}}\Bigg(\sum_mp(m)\sum_xp(x|m,c)\sum_y\mathcal{N}(y|x)\nonumber\\
&\sum_{m'}p(m'|y,c)\delta_{m,m'}\Bigg),
\end{align}
so that 
\begin{equation}
	S_{\rm Cl}\leq S_{\rm Cl}^{\max}.
\end{equation}

\subsection{Preparation contextuality drives the quantum advantage}\label{prepctxnecc}

In the protocol we have described, the following operational equivalences hold: $\sum_x p(x|m_1)\rho^{(m_1)}_x=\sum_x p(x|m_2)\rho^{(m_2)}_x\equiv \rho_B$ for all pairs of distinct messages $m_1,m_2\in {\rm \bf Msg}$.\footnote{Although we are using quantum notation here, operationally, this prepare-and-measure setup on Bob's system (see Fig.~\ref{PMsetup}) can be viewed as a multisource with settings $\mathbb{S}=m\in {\rm \bf Msg}$ and outcomes $\mathbb{s}=x\in X$ that occur with probability $p(\mathbb{s}|\mathbb{S})=p(x|m)$. The operational equivalences can then be expressed as in Eq.~\ref{prepop}, i.e., $\forall [\mathbb{m}|\mathbb{M}]$, $\sum_{\mathbb{s}_1}p(\mathbb{m},\mathbb{s}_1|\mathbb{M},\mathbb{S}_1)=\sum_{\mathbb{s}_2}p(\mathbb{m},\mathbb{s}_2|\mathbb{M},\mathbb{S}_2)$, once we compute the probability $p(\mathbb{m}|\mathbb{M},\mathbb{S},\mathbb{s})=\Tr(\rho^{(\mathbb{S})}_{\mathbb{s}} E^{(\mathbb{M})}_{\mathbb{m}})$ for all effects $E^{(\mathbb{M})}_{\mathbb{m}}$ representing measurement events $[\mathbb{m}|\mathbb{M}]$. That is, $\sum_x p(x|m_1)\rho^{(m_1)}_x=\sum_x p(x|m_2)\rho^{(m_2)}_x$ translates to $\sum_x p(x|m_1)\Tr(\rho^{(m_1)}_x E^{(\mathbb{M})}_{\mathbb{m}})=\sum_x p(x|m_2)\Tr(\rho^{(m_2)}_x E^{(\mathbb{M})}_{\mathbb{m}})$ for all $[\mathbb{m}|\mathbb{M}]$. The multimeter has settings $\mathbb{M}$ and outcomes $\mathbb{m}$ that range over the set of all measurement events in the operational theory. Note, however, that as long as we assume the shared common-cause resource is non-signalling (as is the case with entangled states in quantum theory), we do not need to explicitly verify these operational equivalences by varying over all measurement events: these equivalences are implied by the non-signalling nature of the shared common-cause resource.} This follows from the fact that the common-cause correlations shared between Alice and Bob must be nonsignalling: Alice's choice of POVM $\mathbb{M}_{m}$ encoding the message $m$ ($m\in{\rm \bf Msg}$) steers Bob's system to the ensemble of states $\{(p(x|m),\rho^{(m)}_x)\}_{x\in X^{(m)}}$; however, on coarse-graining, the reduced state on Bob's side, $\rho_{\rm B}=\Tr_{\rm A}\rho_{\rm AB}$, is the same for all choices of $m$, and thus the common-cause correlations cannot be used by Bob to infer $m$.

Preparation noncontextuality then entails that 
$$\sum_{\mathbb{s}} p(x|m_1)p(\lambda|m_1,x)=\sum_x p(x|m_2)p(\lambda|m_2,x)$$

$$\sum_x p(x|m_1)p(\lambda|m_1,x)=\sum_x p(x|m_2)p(\lambda|m_2,x)$$ for all $\lambda\in \Lambda$, for all $m_1,m_2\in{\rm \bf Msg}$. That is, $p(\lambda|m_1)=p(\lambda|m_2)\equiv p(\lambda)$ for all $\lambda\in \Lambda$, where $\Lambda$ is the ontic state space of Bob's system.

Given the prior distribution $\{p(m)\}_
{m\in{\rm\bf Msg}}$ and the channel probabilities $\{\mathcal{N}(y|x)\}_{y,x}$, we obtain, under the assumption of preparation noncontextuality for Bob's system, the following expression for the one-shot success probability and the preparation noncontextual upper bound on it:
\begin{align}
&S_{\rm PNC}\nonumber\\
=&\sum_{\lambda}p(\lambda)\Bigg(\sum_mp(m)\sum_xp(x|m,\lambda)\sum_y \mathcal{N}(y|x)\nonumber\\
&\sum_{m'}p(m'|y,\lambda)\delta_{m,m'}\Bigg)\nonumber\\
\leq&\max_{\lambda}\Bigg(\sum_mp(m)\sum_xp(x|m,\lambda)\sum_y\mathcal{N}(y|x)\nonumber\\
&\sum_{m'}p(m'|y,\lambda)\delta_{m,m'}\Bigg)\equiv S_{\rm PNC}^{\rm max}.
\end{align}
To see how this comes about, note that the one-shot success probability,
\begin{align}
S=&\sum_mp(m)\sum_xp(x|m)\sum_y\mathcal{N}(y|x)\nonumber\\
&\sum_{m'}p(m'|y,m,x)\delta_{m,m'},\nonumber
\end{align}
when expressed in terms of an ontological model for Bob's system, requires that 
$p(m'|y,m,x)=\sum_{\lambda}p(m'|y,\lambda)p(\lambda|m,x)$. We can then write a joint probability distribution $p(x,\lambda|m)=p(\lambda|m,x)p(x|m)$, which can be rewritten as $p(x,\lambda|m)=p(x|m,\lambda)p(\lambda|m)$. Recalling that preparation noncontextuality requires $p(\lambda|m)=p(\lambda)$ for all $m$, we obtain the expression for $S_{\rm PNC}$. Hence, we have: $S_{\rm PNC}^{\max}=S_{\rm Cl}^{\max}$.

We now argue that the upper bound $S^{\max}_{\rm PNC}$ can be saturated by a preparation noncontextual ontological model. Such a model (achieving $S_{\rm PNC}=S_{\rm PNC}^{\max}$) must necessarily have $p(\lambda|m)=p(\lambda)=\delta_{\lambda,\lambda_{\max}}$ for all $m\in{\rm \bf Msg}$, where $\lambda_{\max}$ is the\footnote{For our purposes, we can take $\lambda_{\max}$ to be unique without any loss of generality.} ontic state of Bob's system ($\lambda_{\max}\in\Lambda$) that satisfies 
\begin{align}
S_{\rm PNC}^{\max}=&\sum_mp(m)\Bigg(\sum_xp(x|m,\lambda_{\max})\sum_y\mathcal{N}(y|x)\nonumber\\
&\sum_{m'}p(m'|y,\lambda_{\max})\delta_{m,m'}\Bigg)
\end{align}
Thus, $p(\lambda|m)=\sum_x p(x|m)p(\lambda|m,x)=\delta_{\lambda,\lambda_{\max}}$ implies that $p(\lambda|m,x)=\delta_{\lambda,\lambda_{\max}}$ for all $m,x$. 
We must then have $p(m'|y,m,x)=\sum_{\lambda}p(m'|y,\lambda)p(\lambda|m,x)=\sum_{\lambda}p(m'|y,\lambda)\delta_{\lambda,\lambda_{\max}}=p(m'|y,\lambda_{\max})$, i.e., the statistics of $y$ does not change in response to variations in $m$ and $x$ but is directly determined by the ontic state that is deterministically sampled by {\em every} preparation procedure. In fact, the response functions cannot even deviate from the best deterministic classical strategy. This preparation noncontextual ontological model, therefore, trivially reproduces the operational equivalence required on Bob's system, i.e., the operational equivalence between all the coarse-grainings of preparation ensembles induced by Alice's measurements. It also achieves $S=S^{\max}_{\rm PNC}$ by fixing the response functions on Bob's side to mimic the best deterministic classical strategy.\footnote{Hence, the ontological model can only simulate an operational theory that has just one equivalence class of preparations and, furthermore, associates outcomes to its measurements deterministically. As such, in the presence of other empirical facts that an operational theory might present (such as the simple fact that Bob's system can be prepared in operationally inequivalent ways), this preparation noncontextual model will fail to reproduce predictions of the theory that go beyond the required operational equivalence between preparation procedures. Generically, therefore, any non-trivial (at least in the sense of admitting operationally inequivalent preparation procedures) preparation noncontextual ontological model will only achieve $S<S^{\max}_{\rm PNC}$.}

Hence, we have that $S\leq S^{\max}_{\rm Cl}$ {\em is} a preparation noncontextuality inequality and any quantum advantage in this communication task witnesses preparation contextuality. 

\subsection{The shared state must violate a Bell inequality under the local measurements used in the communication protocol}\label{bellnlnecc}
In a quantum implementation of the communication protocol, shared entanglement between Alice and Bob is crucial for there to be an advantage over the classical one-shot success probability. However, entanglement alone is not enough: the entanglement must be such that it enables a Bell inequality violation relative to the local measurements that Alice and Bob implement. To see this, consider a nonlocal game that uses the same resources---shared entanglement and local measurements---as the communication protocol but under spacelike separation (hence no classical channel): Alice and Bob implement their local measurements labelled by $m$ and $y$, respectively, and obtain their respective outcomes $x$ and $m'$ with a joint probability $p(x,m'|m,y)$ and the joint statistics thus collected admits a locally causal model, i.e.,
\begin{align}
p(x,m'|m,y)=\sum_{\omega\in \Omega}p(x|m,\omega)p(m'|y,\omega)p(\omega),
\end{align}
where $\omega$ denotes the shared ontic state sampled from the ontic state space $\Omega$ of the bipartite system Alice and Bob share. (Note that this is, in general, different from the ontic state space $\Lambda$ for Bob's system alone.) In such a case, it is straightforward to see that the achievable success probability is no better than the best deterministic classical strategy. Firstly,
\begin{align}
S=&\sum_mp(m)\sum_xp(x|m)\sum_y\mathcal{N}(y|x)\nonumber\\
&\sum_{m'}p(m'|y,m,x)\delta_{m,m'}\nonumber\\
=&\sum_mp(m)\sum_x\sum_y\mathcal{N}(y|x)\nonumber\\
&\sum_{m'}p(x|m)p(m'|y,m,x)\delta_{m,m'}\nonumber\\
=&\sum_mp(m)\sum_x\sum_y\mathcal{N}(y|x)\sum_{m'}p(x,m'|m,y)\delta_{m,m'},
\end{align}
which can be now be expressed as 
\begin{align}
S=&\sum_mp(m)\sum_x\sum_y\mathcal{N}(y|x)\sum_{m'}\Bigg(\sum_{\omega\in \Omega}p(x|m,\omega)\nonumber\\
&p(m'|y,\omega)p(\omega)\Bigg)\delta_{m,m'}\nonumber\\
=&\sum_{\omega\in \Omega}p(\omega)\Bigg(\sum_mp(m)\sum_xp(x|m,\omega)\sum_y\mathcal{N}(y|x)\nonumber\\
&\sum_{m'}p(m'|y,\omega)\delta_{m,m'}\Bigg)\nonumber\\
\leq&S^{\rm Cl}_{\max}.\label{localbound}
\end{align}
Hence, Bell nonlocality of the shared entangled state relative to the measurements carried out by Alice and Bob is a necessary condition for a quantum advantage in this communication task.\footnote{Note, however, that -- contrary to the counterfactual Bell scenario we just considered -- the measurement choice $y$ in the communication protocol of interest is not free and is determined probabilistically by Alice's measurement outcome $x$. That is, the protocol requires a wiring of Alice's system with Bob's using the classical channel, something distinct from a Bell scenario.}

\section{Preparation contextuality vis-\`a-vis Bell nonlocality: the connection with nonlocal games}\label{sec5}

It is known that any bipartite proof of Bell nonlocality can be turned into a proof of preparation contextuality on each wing of the Bell experiment, i.e., Bell nonlocality implies preparation contextuality on both wings of the Bell experiment. It is easiest to see this in the contrapositive: that is, the existence of a preparation noncontextual ontological model on any wing of the Bell experiment implies the existence of a locally causal model for the Bell experiment. We provide an explicit argument in Appendix \ref{pncimplieslc}.\footnote{On the other hand, in the simplest possible scenario capable of exhibiting preparation contextuality (with two tomographically complete binary measurements and four preparations), it has been shown that the existence of a preparation noncontextual ontological model is {\em equivalent} to the existence of a locally causal model in any bipartite extension of the one-party scenario to a CHSH scenario \cite{Pusey18}. Ref.~\cite{Pusey18} also noted that Barrett was the first to show the implication from preparation noncontextuality to local causality in any bipartite extension of a given one-party scenario. This has also been observed in Ref.~\cite{TU20}}

For the task of enhancing the one-shot success probability of a classical channel, preparation contextuality and Bell nonlocality are even more intimately related than the general situation above. As we just showed in Section \ref{bellnlnecc}, $S>S_{\rm Cl}^{\max}$ implies Bell nonlocality of the joint statistics $\{p(x,m'|m,y)\}_{x,m',m,y}$. This allows us to state the following proposition:

\begin{proposition}\label{nonlocalgamexistence}
For every classical channel $\mathcal{N}$ that admits an enhancement of the one-shot success probability driven by preparation contextuality, i.e., $S>S_{\rm Cl}^{\max}$, there exists a nonlocal game which can be  won with a better-than-classical success probability by the same entangled state and local measurements which enable an advantage in the communication task. By construction, we also have the converse: an advantage in this nonlocal game would imply an advantage in the communication task.
\end{proposition}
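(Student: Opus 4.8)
The plan is to construct the nonlocal game explicitly from the data of the communication task and then show that the value of the game is controlled by the same success probability $S$, so that a better-than-classical $S$ translates directly into a better-than-classical game value and vice versa. The key observation, already available from Section \ref{bellnlnecc}, is that the success probability can be written purely in terms of the counterfactual bipartite statistics $\{p(x,m'|m,y)\}_{x,m',m,y}$ obtained when Alice and Bob perform their local measurements \emph{with no intervening channel}, namely
\begin{align}
S=\sum_{m,x,y,m'}p(m)\,\mathcal{N}(y|x)\,\delta_{m,m'}\,p(x,m'|m,y).
\end{align}
First I would read off from this expression the natural definition of a nonlocal game: Alice receives the question $m\in{\rm \bf Msg}$ sampled from the prior $\{p(m)\}$, Bob receives a question $y\in Y$, Alice answers with $x\in X$ and Bob with $m'\in{\rm \bf Msg}$, and the referee declares a win according to a predicate and question-distribution that reproduce exactly the weights $\mathcal{N}(y|x)\,\delta_{m,m'}$ appearing above. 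The winning condition is thus ``$m'=m$ \emph{and} the input $x$ Alice reports is compatible with the channel output $y$ Bob was handed,'' with the factor $\mathcal{N}(y|x)$ absorbed into either the referee's question distribution over $y$ or into a weighted payoff.

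Second, with the game so defined, its quantum value under the shared state $\rho_{\rm AB}$ and the protocol's local measurements is by construction equal (or proportional) to $S$, while its classical value is the supremum over local-deterministic strategies, which by the argument culminating in Eq.~\eqref{localbound} is exactly $S_{\rm Cl}^{\max}$. Hence the chain of equivalences $S>S_{\rm Cl}^{\max}$ in the communication task $\Leftrightarrow$ the quantum strategy beats the classical game value $\Leftrightarrow$ Bell nonlocality of $\{p(x,m'|m,y)\}$ follows immediately, and the stated converse is simply the reverse reading of the same identity: any strategy winning the game above its classical value supplies local measurements and an entangled state that, fed into the communication protocol, yield $S>S_{\rm Cl}^{\max}$.

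The delicate point — and the main obstacle — is the accounting of the question distribution and normalization. In a standard nonlocal game the referee samples $(m,y)$ from a fixed product-or-correlated distribution independent of the players' answers, whereas here the weight $\mathcal{N}(y|x)$ couples Bob's question $y$ to Alice's \emph{answer} $x$. I would handle this by moving the channel into the scoring rather than the question sampling: give Bob the question $y$ uniformly (or from any full-support distribution over $Y$), and define the referee's payoff on the joint answers $(x,m')$ against questions $(m,y)$ to be $\mathcal{N}(y|x)\,\delta_{m,m'}$ up to the constant needed to keep it a valid (possibly weighted) game. One must check that this reweighting preserves the \emph{ratio} between quantum and classical values, so that strict inequality is genuinely transported in both directions; this is where care with the normalization constants is essential, and it is the step I expect to require the most scrutiny. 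Once the game is phrased as a weighted game whose classical optimum is $S_{\rm Cl}^{\max}$ and whose value under the given quantum strategy is $S$, the proposition is immediate, and I would close by remarking that in the zero-error case $S=1$ this construction specializes to the pseudotelepathy games of Ref.~\cite{CLM10}, recovering the known connection.
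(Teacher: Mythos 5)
Your proposal is correct, and it is more constructive than the paper's own proof of this proposition. The paper argues purely by contraposition: Eq.~\eqref{localbound} shows that a locally causal model for the counterfactual statistics $\{p(x,m'|m,y)\}$ forces $S\leq S_{\rm Cl}^{\max}$, so $S>S_{\rm Cl}^{\max}$ implies those correlations are Bell nonlocal and hence violate \emph{some} Bell inequality, i.e., win \emph{some} nonlocal game beyond its classical value; the converse is then read off from the fact that the relevant Bell functional is the success functional $S$ itself, whose local bound is $S_{\rm Cl}^{\max}$. You instead exhibit the game explicitly as a weighted game with question distribution $p(m)\cdot\tfrac{1}{|Y|}$ and payoff $\mathcal{N}(y|x)\delta_{m,m'}$ (realizable as a genuine win/lose game by letting the referee accept with that probability), so that the game value equals $\tfrac{1}{|Y|}S$ for \emph{every} strategy and the equivalence of advantages is immediate. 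Two remarks: first, your parenthetical ``or from any full-support distribution over $Y$'' is not quite right as stated --- a non-uniform $u(y)$ changes the Bell functional away from a fixed multiple of $S$ unless you also rescale the payoff by $u(y)^{-1}$ (keeping it in $[0,1]$), so you should commit to the uniform choice or note the rescaling. Second, your construction is closely related to, but not identical with, the paper's Theorem~\ref{isomorphism}: there the authors restrict to $k$-regular output-uniform channels precisely so that the predicate $V$ of Eq.~\eqref{gamerules} depends only on the channel hypergraph (awarding a free win when $y\notin Y_x$ and using no-signalling to show that term contributes a constant), whereas your game works for arbitrary channels but requires the referee to know the exact channel probabilities and to use a $[0,1]$-valued (or randomized) payoff. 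This is a genuine trade-off worth flagging, since the paper explicitly leaves the existence of a fully generic game construction as an open question.
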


Indeed, if this were not the case (i.e., no such nonlocal game existed) then the enhancement of the one-shot success probability couldn't have been exhibited because the shared correlations between Alice and Bob would then be Bell-local. Hence, the problem of one-shot classical communication assisted by non-signalling correlations characterizes a family of Bell scenarios where a proof of preparation contextuality on one wing implies a proof of Bell nonlocality between the two wings. This provides further insight into the conditions under which preparation contextuality for a single system can be said to imply Bell nonlocality for its appropriate bipartite extensions, in line with previous work where a certain type of preparation contextuality was shown to imply Bell nonlocality relative to a bipartite extension \cite{LM13}. \footnote{In the case of bipartite pure entangled states of Schmidt rank greater than two (such as the two-ququart maximally entangled state used in Ref.~\cite{CLM10}), it has been shown that preparation contextuality of the reduced state on Bob's system, in conjunction with Alice's ability to remotely steer Bob's system to arbitrary preparation ensembles using entanglement \cite{HJW93}, implies that  the entangled state can exhibit Bell nonlocality \cite{BBC14}. So, at least in the case of such pure entangled states, the implication from preparation contextuality to Bell nonlocality that we consider in this paper also follows from Ref.~\cite{BBC14}.} 

The explicit construction of a nonlocal game instantiating Proposition \ref{nonlocalgamexistence}, however, would depend on the properties of the channel $\mathcal{N}$. We know that at least in the case of the Cubitt {\em et al.}~protocol under ideal conditions, these nonlocal games correspond to pseudo-telepathy (PT) games inspired by the KS theorem \cite{CLM10, HR83, BBT05}. In the case of the Prevedel {\em et al.}~example \cite{PLM11}, the associated nonlocal game is essentially the well-known CHSH game \cite{CHSH, BCP14}. It is an open question whether there exists a generic construction of a nonlocal game, instantiating Proposition \ref{nonlocalgamexistence}, that always works starting from any channel $\mathcal{N}$.

We {\em can}, however, provide a fairly general construction of nonlocal games starting from a family of classical channels, thus instantiating Proposition \ref{nonlocalgamexistence}. This general construction, in particular, reproduces as special cases the examples studied in Refs.~\cite{CLM10,PLM11}. It is inspired by the pseudotelepathy (PT) game discussed in Ref.~\cite{CLM10}, allowing, however, the case where the quantum strategy is imperfect and where neither Alice nor Bob might have access to a KS set. We define this mapping from the communication task to a nonlocal game below.

The family of classical channels for which our construction works satisfies two properties for any channel $\mathcal{N}$ in the family: first, its channel hypergraph $H(\mathcal{N})$ is $k$-regular (i.e., every vertex appears in $k$ hyperedges) for some positive integer $k$, and second, the channel probabilities are {\em entirely} fixed by the combinatorial structure of the channel, i.e., $\mathcal{N}(y|x)=\frac{1}{|Y_x|}\delta(y\in Y_x)$, where $\delta(a\in A)$ defines an indicator function for membership in set $A$, taking value $1$ if $a\in A$ and $0$ otherwise. Channels satisfying the first property will be called {\em $k$-regular} channels and those satisfying the second property will be called {\em output-uniform} channels. Hence, the classical channels we consider below will be $k$-regular and output-uniform classical channels. Further, we will assume that Alice's choice of the message to send in a particular run is uniformly random, i.e., $p(m)=\frac{1}{|{\rm\bf Msg}|}$ for all $m\in {\rm\bf Msg}$. All this amounts to the following expression for the one-shot success probability:
\begin{align}\label{oneshotexp}
S=\frac{1}{|{\rm \bf Msg}|}\frac{1}{k}\sum_{m,m',x,y\in Y_x}\delta_{m,m'}p(x,m'|m,y).
\end{align}

The corresponding nonlocal game is specified by the following: Alice receives questions $m\in {\rm \bf Msg}$ and replies with answers $x\in X$; Bob receives questions $y\in Y$ and replies with answers $m'\in  {\rm \bf Msg}$; the conditional joint probability distributions of interest, therefore, are given by $\{p(x,m'|m,y)\}_{x,m',m,y}$; the Referee sends them questions $m,y$ according to the probability distribution $p(m,y)=p(m)p(y)=\frac{1}{|{\rm \bf Msg}|}\frac{1}{|Y|}$; in order to win the game, Alice and Bob must produce outputs $x,m'$ (respectively) such that the condition $V(x,m',m,y)=1$ is satisfied, where  
\begin{align}\label{gamerules}
V(x,m',m,y)\equiv\begin{cases}
1, \textrm{ if }y\notin Y_x \textrm{ (i.e., }\mathcal{N}(y|x)=0\textrm{)},\\
\delta_{m,m'},\textrm{ if }y\in Y_x \textrm{ (i.e., }\mathcal{N}(y|x)>0\textrm{)}.
\end{cases}
\end{align}
The probability of winning the game is then given by 
\begin{align}
S_{\rm Bell}\equiv&\frac{1}{|{\rm \bf Msg}|}\frac{1}{|Y|}\sum_{m,m',x,y}p(x,m'|m,y)V(x,m',m,y)\\
=&\frac{1}{|{\rm \bf Msg}|}\frac{1}{|Y|}\sum_{m,m',x,y\in Y_x}\delta_{m,m'}p(x,m'|m,y)\nonumber\\
+&\frac{1}{|{\rm \bf Msg}|}\frac{1}{|Y|}\sum_{m,m',x,y\notin Y_x}p(x,m'|m,y).\label{bellexp}
\end{align}

Note that this mapping relies only on combinatorial properties of $\mathcal{N}$, namely, its channel hypergraph $H(\mathcal{N})$, and is a straightforward generalization of the connection between the one-shot classical communication protocol and pseudo-telepathy games note in Proposition 3 of Ref.~\cite{CLM10}. The connection of the protocol of Ref.~\cite{PLM11} with the CHSH game, for example, falls under this generalization.

We are now ready to prove the following theorem:

\begin{theorem}\label{isomorphism}
	Consider a $k$-regular output-uniform classical channel $\mathcal{N}$, i.e., $\mathcal{N}(y|x)=\frac{1}{k}\delta(y\in Y_x)$. Then, for any bipartite entangled state $\rho_{\rm AB}$ shared between Alice and Bob, Alice's local measurements $\{\mathbb{M}^{\rm (A)}_m\equiv\{E^{(m)}_x\}_{x\in X^{(m)}}\}_{m\in{\rm\bf Msg}}$, and Bob's local measurements $\{\mathbb{M}^{\rm (B)}_y\equiv\{E^{(y)}_{m'}\}_{m'\in {\rm \bf Msg}}\}_{y\in Y}$ --- all denoted by the triplet $(\rho_{\rm AB},\{\mathbb{M}^{\rm (A)}_m\}_{m\in{\rm\bf Msg}}, \{\mathbb{M}^{\rm (B)}_y\}_{y\in Y})$ ---
	the following are equivalent:
	
	\begin{enumerate}
		
		\item The triplet $(\rho_{\rm AB},\{\mathbb{M}^{\rm (A)}_m\}_{m\in{\rm\bf Msg}}, \{\mathbb{M}^{\rm (B)}_y\}_{y\in Y})$ provides a quantum advantage in the task of one-shot classical communication over $\mathcal{N}$, i.e., $S>S^{\max}_{\rm Cl}$,
		
		\item The triplet $(\rho_{\rm AB},\{\mathbb{M}^{\rm (A)}_m\}_{m\in{\rm\bf Msg}}, \{\mathbb{M}^{\rm (B)}_y\}_{y\in Y})$ provides a quantum advantage in the corresponding nonlocal game (cf.~Eq.~\eqref{gamerules}), i.e., $S_{\rm Bell}>S_{\rm local}^{\max}$, where $S_{\rm local}^{\max}$ is the Bell-local bound given by
		\begin{align}
		S_{\rm local}^{\max}\equiv\max_{p(x,m'|m,y)\in\mathcal{L}}S_{\rm Bell},
		\end{align}
		$\mathcal{L}$ being the set of Bell-local probability distributions. We use ``$p(x,m'|m,y)\in\mathcal{L}$" as shorthand for membership of the full probability vector $(p(x,m'|m,y))_{x,m',m,y}$ in the set of Bell-local probability vectors \cite{BCP14}.
	\end{enumerate}	
Further, we have that $S=1\Leftrightarrow S_{\rm Bell}=1$.\footnote{Note that this recovers the special case of pseudotelepathy games considered in Proposition 3 of Ref.~\cite{CLM10}.}
\end{theorem}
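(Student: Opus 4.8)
The plan is to show that, for every no-signalling distribution $p(x,m'|m,y)$, the game success $S_{\rm Bell}$ is a strictly increasing \emph{affine} function of the communication success $S$; both equivalences then follow immediately by monotonicity. First I would simplify the ``automatic win'' term in Eq.~\eqref{bellexp}, namely $B\equiv\sum_{m,m',x,y\notin Y_x}p(x,m'|m,y)$. Summing over $m'$ and using no-signalling from Bob to Alice (which holds for any quantum or Bell-local strategy), $\sum_{m'}p(x,m'|m,y)=p(x|m)$ independently of $y$, so $B=\sum_m\sum_x p(x|m)\,|\{y:y\notin Y_x\}|$. Since $\mathcal{N}(y|x)=\frac{1}{k}\delta(y\in Y_x)$ is normalized we have $|Y_x|=k$ for every $x$ (equivalently, $k$-regularity of $H(\mathcal{N})$), hence $|\{y:y\notin Y_x\}|=|Y|-k$, and with $\sum_x p(x|m)=1$ this gives the strategy-independent constant $B=|{\rm \bf Msg}|\,(|Y|-k)$.

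Next I would combine this with the in-support term $A\equiv\sum_{m,m',x,y\in Y_x}\delta_{m,m'}p(x,m'|m,y)$. Equation~\eqref{oneshotexp} gives $S=A/(|{\rm \bf Msg}|\,k)$ while Eq.~\eqref{bellexp} gives $S_{\rm Bell}=(A+B)/(|{\rm \bf Msg}|\,|Y|)$. Substituting $A=|{\rm \bf Msg}|\,k\,S$ and the value of $B$ yields the central identity
\begin{align}
S_{\rm Bell}=\frac{k}{|Y|}\,S+\left(1-\frac{k}{|Y|}\right),
\end{align}
valid for every no-signalling strategy. Because $0<k/|Y|\le 1$, this is a strictly increasing affine map.

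The two equivalences are then read off. Applying the identity to the fixed quantum strategy relates its $S$ and $S_{\rm Bell}$. To compare thresholds, note that $\mathcal{L}$ sits inside the no-signalling set, so $S_{\rm local}^{\max}=\max_{\mathcal{L}}S_{\rm Bell}=\frac{k}{|Y|}\max_{\mathcal{L}}S+\left(1-\frac{k}{|Y|}\right)$; and by the argument of Section~\ref{bellnlnecc} (Eq.~\eqref{localbound}) every Bell-local distribution obeys $S\le S_{\rm Cl}^{\max}$, with equality attained since any optimal deterministic classical communication strategy embeds as a local product distribution in $\mathcal{L}$. Hence $\max_{\mathcal{L}}S=S_{\rm Cl}^{\max}$ and $S_{\rm local}^{\max}=\frac{k}{|Y|}S_{\rm Cl}^{\max}+\left(1-\frac{k}{|Y|}\right)$. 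Monotonicity of the affine map gives $S>S_{\rm Cl}^{\max}\Leftrightarrow S_{\rm Bell}>S_{\rm local}^{\max}$, proving $1\Leftrightarrow 2$. Finally, setting $S=1$ in the identity gives $S_{\rm Bell}=1$, and conversely $S_{\rm Bell}=1$ forces $S=1$ since $k/|Y|\neq 0$, establishing $S=1\Leftrightarrow S_{\rm Bell}=1$.

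The main obstacle I anticipate is verifying that the automatic-win term $B$ is genuinely strategy-independent: this is where both $k$-regularity and no-signalling are essential, and it is precisely what makes the relationship affine rather than merely monotone. The remaining subtlety is confirming $\max_{\mathcal{L}}S=S_{\rm Cl}^{\max}$, i.e.\ that optimizing the communication task over Bell-local distributions reproduces exactly the classical communication bound; this requires matching the counterfactual Bell embedding of classical strategies with the conditional-independence structure $p(m'|y,m,x)=p(m'|y)$ used to define $S_{\rm Cl}^{\max}$.
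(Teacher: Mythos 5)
Your proposal is correct and follows essentially the same route as the paper's proof in Appendix B: both derive the strategy-independent value of the out-of-support term from no-signalling and $k$-regularity, arrive at the same affine identity $S_{\rm Bell}=1-\tfrac{k}{|Y|}+\tfrac{k}{|Y|}S$, and read off both equivalences from its strict monotonicity. Your explicit flagging of the step $\max_{\mathcal{L}}S=S_{\rm Cl}^{\max}$ is a point the paper treats more tersely (it asserts this directly after Eq.~\eqref{localbound}), but it is the same argument.
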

\begin{proof} See Appendix \ref{proofisomorphism}.
\end{proof}

Theorem \ref{isomorphism} provides us a way to characterize a family of classical channels $\mathcal{N}$ for which the one-shot success probability can be enhanced by nonsignalling correlations: namely, all $k$-regular and output-uniform classical channels for which there is a gap between the classical and the nonsignalling value of the nonlocal game defined by them following the recipe we have just outlined, \textit{cf.}~Eq.~\eqref{gamerules}.

It is worth emphasizing here the physical distinction between the two tasks -- the one-shot communication task and the corresponding nonlocal game -- we have considered in this section. In the communication task, Alice and Bob must necessarily be timelike separated, but in the nonlocal game, they must necessarily be spacelike separated. In the absence of spacelike separation in the communication task, it is inaccurate to state that Bell nonlocality drives the quantum advantage in the task: to be sure, the states and measurements that drive the quantum advantage in the communication task can also drive the quantum advantage in the nonlocal game, but the two tasks correspond to fundamentally different physical situations. Contextuality, for this reason, is the more natural notion of nonclassicality to appeal to as the driver of quantum advantage in the communication task.

\section{One-shot success probability of communicating a single bit}\label{sec6}

The problem of communicating a single bit through a noisy classical channel has been previously studied in Refs.~\cite{PLM11,HMS13}. Note that the classical value of the one-shot success probability of communicating a single bit (i.e., one out of two messages) in this problem is strictly less than $1$ if and only if the confusability graph of the channel $\mathcal{N}$ is a complete graph, i.e., $\alpha(G(\mathcal{N}))=1$. Hence, it is only for such channels that the possibility of enhancing their one-shot success probability of sending a single bit using shared entanglement exists: for all other channels (with $\alpha(G(\mathcal{N}))\geq 2$), a single bit can always be sent with zero error classically. In the rest of this section, therefore, we will only consider output-uniform channels with a confusability graph that is complete. These channels are a special case of Kochen-Specker (KS) channels, namely, those where it is impossible to pick even a {\em pair} of non-confusable vertices from distinct hyperedges.\footnote{Recall that a KS channel is defined by a channel hypergraph where it is impossible to pick a {\em set} of vertices, one vertex from {\em each} hyperedge, such that all the vertices in this set are mutually non-confusable.} Such KS channels obviously do not admit any KS sets since any set of projectors associated with their vertices will necessarily have to be pairwise commuting, i.e., there would be no incompatibility between the projectors. Our general protocol applies to such channels and here we will consider one particular example, the one studied in Ref.~\cite{PLM11}, as a paradigmatic case and show how it fits within our framework.

\subsection*{The Prevedel {\em et al.}~protocol}\label{prevedelprotocol}

The classical channel considered by Prevedel {\em et al.}~\cite{PLM11} is specified as follows: the input alphabet is the set of two-bit strings, $X=\{00,01,10,11\}$, and the output alphabet is a set of trit-bit strings $Y=\{(1,0),(2,0),(P,0),(1,1),(2,1),(P,1)\}$; the channel hypergraph therefore consists of $4$ vertices (labelled by $x=b_1b_2\in X$) with all possible two-vertex hyperedges, i.e., $6$ hyperedges labelled by $y=(t,b)\in Y$; each hyperedge $y=(t,b)$ uniquely identifies a pair of inputs $x\in X_y$ sharing one of three properties: the value of the first bit (i.e., $y=(1,b_1)$), the value of the second bit (i.e., $y=(2,b_2)$), or the parity of the two bits (i.e., $y=(P,b_1\oplus b_2)$); so, for example, the output $y=(1,0)$ identifies the pair of inputs $X_y=\{00,01\}$ in its support; the channel is, therefore, $3$-regular and output-uniform with channel probabilities $\mathcal{N}(y|x)=\frac{1}{3}\delta(y\in Y_x)$. The channel hypergraph is illustrated in Fig.~\ref{prevedelchannelhypergraph}.

The general protocol of Section \ref{sec3} takes the following form: 
\begin{align}
&{\rm \bf Msg}=\{m\}_{m=0}^1, \textrm{ with the encoding } \{X^{(m)}\}_m \textrm{ given by }\nonumber\\
&X_{m=0}\equiv\{00, 01\}\textrm{ and } X_{m=1}\equiv\{10, 11\}.
\end{align}
Alice carries out one of two possible measurements labelled by $m\in\{0,1\}$, their outcomes labelled by $b_2\in\{0,1\}$. On obtaining outcome $b_2$ for measurement $m$, Alice inputs the two-bit string $x=mb_2$ to the channel. Bob possesses one of two possible binary measurements labelled by $v\in\{0,1\}$ (their outcomes labelled by $z\in\{0,1\}$) and must use the output $y$ from the classical channel (which gives him some information about the possible inputs $X_y$) to decide his measurement strategy in order to infer the message Alice's message $m$. The full strategy is detailed below:

	\begin{align}
	&v\in\{0,1\},\nonumber\\
	&p(v|y)=\begin{cases}
	\delta_{v,1}, \textrm{ for }y=(2,b_2),\\
	\delta_{v,0},\textrm{ for }y=(P,b_1\oplus b_2),\\
	\textrm{arbitrary for } y=(1,b_1).
	\end{cases}\nonumber\\
	&z\in\{0,1\}\nonumber\\
	&g(z,y)=\begin{cases}
	b,\textrm{ for } y=(t,b)=(1,b_1),\\
	b\oplus z,\textrm{ for } y=(t,b)\in\{(2,b_2),(P,b_1\oplus b_2)\}
	\end{cases}\\
	&p(m'=m|z,y)=\delta_{g(z,y),m}.
	\end{align}
	
Assuming $p(m)=\frac{1}{2}$, the expression for the success probability is given by

\begin{align}\label{prevedelexpr}
S=\frac{1}{3}+\frac{1}{6}\sum_{b_2,z,m,v}p(b_2,z|m,v)\delta_{b_2\oplus z,mv}.
\end{align}
	
We refer to Appendix \ref{prevedelsuccessprobability} for a complete derivation of the above expression. Now, the joint statistics $p(b_2,z|m,v)$ can be interpreted as arising from a Bell-CHSH scenario \cite{CHSH}, i.e., a Bell scenario where each party has two possible binary-outcome measurements ($m$ and $v$ in this case), noting that the statistics is non-signalling. We can then define the success probability in such a CHSH game \cite{BCP14} as 
\begin{align}
S_{\rm CHSH}\equiv\frac{1}{4}\sum_{b_2,z,m,v}p(b_2,z|m,v)\delta_{b_2\oplus z,mv},
\end{align}
so that 
\begin{align}\label{prevedelchsh}
S=\frac{1}{3}+\frac{2}{3}S_{\rm CHSH}.
\end{align}

We have
\begin{enumerate}
	\item Classically: \begin{equation}
	\sum_{b_2,z,m,v}p(b_2,z|m,v)\delta_{b_2\oplus z,mv}\leq 3,
	\end{equation}
	so that $S=S_{\rm Cl}\leq \frac{5}{6}\approx 0.833$, and
	\item Quantumly: 
	\begin{equation}
	\sum_{b_2,z,m,v}p(b_2,z|m,v)\delta_{b_2\oplus z,mv}\leq 2+\sqrt{2},
	\end{equation}
	so that $S=S_{\rm Q}\leq \frac{1}{3}+\frac{2+\sqrt{2}}{6}\approx 0.902$.
\end{enumerate}	
And, of course, on allowing arbitrary nonsignalling correlations, a PR-box \cite{PR94, BCP14} can achieve $S=S_{\rm PR}=1$.

How is the CHSH game related to the nonlocal game corresponding to the Prevedel {\em et al.}~protocol that one would obtain following Theorem \ref{isomorphism}? In Appendix \ref{prevedelgame}, we show that, in fact, this nonlocal game is essentially the CHSH game rewritten in such a way that Alice has two inputs while Bob has six.

\section{One-shot success probability of classical communication via general Kochen-Specker (KS) channels}\label{sec7}
\subsection{Channel hypergraph, encoding, and context-independent guessing (CIG):  the Cubitt {\em et al.}~strategy}
In this section we will go beyond channels with complete confusability graphs (for which one-shot zero-error communication is impossible) and consider general Kochen-Specker (KS) channels. Of particular interest will be KS channels that admit KS sets: for some of these channels it is possible to achieve an enhancement of the one-shot zero-error capacity using entanglement, e.g., in Ref.~\cite{CLM10}, Cubitt {\em et al.}~showed that one can use a classical channel based on Peres's $24$-ray two-qubit KS set \cite{ Peres91} which also underlies the Peres-Mermin proof of KS-contextuality \cite{Peres90, Mermin93}.

We will focus on the one-shot success probability that can be achieved using the Cubitt {\em et al.} \cite{CLM10} strategy when Bob only assumes the structure of the channel hypergraph, $H(\mathcal{N})$, and his knowledge of the encoding Alice uses but makes no assumptions about the exact channel probabilities $\{\mathcal{N}(y|x)\}_{x\in X,y\in Y}$. This could, for example, happen when Alice and Bob trust the channel hypergraph but they do not trust the channel probabilities of the classical channel given by some provider.

Recall that the one-shot success probability following the Cubitt {\em et al.} strategy is given by
\begin{align}
S=&\sum_mp(m)\sum_xp(x|m)\sum_y\mathcal{N}(y|x)\nonumber\\
&\sum_{m'}p(m'|y,m,x)\delta_{m,m'}\nonumber\\
=&\sum_mp(m)\sum_xp(x|m)\sum_y\mathcal{N}(y|x)\nonumber\\
&\sum_{x'\in X^{(m)}}p(x'|y,m,x).
\end{align}

Bob uses his knowledge of the channel hypergraph, $H(\mathcal{N})$, and the output received from the channel, $y$, along with any nonsignalling correlations shared with Alice, to make a guess $x'\in X_y$ for Alice's input $x$. Our assumption that Bob is oblivious of the channel probabilities means that the probability with which Bob makes his guess, $p(x'|y,m,x)$, should be independent of the particular $y\in Y_{x'}$ (``context" of the guess $x'$) that Bob receives. It can depend only on the support of $x'$ via an indicator function, i.e., $p(x'|y,m,x)=p(x'|m,x)\delta(y\in Y_{x'})$, where $\delta(y\in Y_{x'})=1$ if $y\in Y_{x'}$ and $0$ otherwise. Overall, we have
\begin{align}\label{cig}
&\textrm{for any } x'\in X:\nonumber\\
&p(x'|y_1,m,x)=p(x'|y_2,m,x)\equiv p(x'|m,x),\nonumber\\
&\forall y_1,y_2\in Y_{x'}, \forall x\in X^{(m)}, \forall m\in {\rm \bf Msg}.
\end{align}
We term this condition {\em context-independent guessing} (CIG). We will see that the quantum strategy of the Cubitt {\em et al.} protocol \cite{CLM10} satisfies this constraint and this fact allows us to invoke the assumption of measurement noncontextuality in addition to preparation noncontextuality in placing a noncontextual upper bound on the one-shot success probability. This will in turn allow us to analyze the Cubitt {\em et al.}~construction and the critical role of the KS theorem in it in the light of generalized noncontextuality \`a la Spekkens \cite{Spekkens05}. More concretely, we will see that a non-trivial upper bound on the classical one-shot success probability in this protocol can be characterized by a hypergraph invariant -- the weighted max-predictability \cite{KS18} -- following the approach of Ref.~\cite{Kunjwal20}.

On the other hand, note that classically we have $p(x'|y,x,m)=p(x'|y)$ for all $x\in X^{(m)}, m\in{\rm\bf Msg}$ and the CIG constraint of Eq.~\eqref{cig} (in a classical strategy) then requires that $p(x'|y)=p(x')\delta(y\in Y_{x'})$. That is,
\begin{align}\label{CCcig}
\textrm{for any } x'\in X&: p(x'|y_1,x,m)=p(x'|y_2,x,m)\equiv p(x'),\nonumber\\
&\forall y_1,y_2\in Y_{x'}, \forall x\in X^{(m)}, \forall m\in {\rm \bf Msg}.
\end{align}
Indeed, in a classical strategy for this communication task, any assignment $p:X\rightarrow [0,1]$ respecting the context-independence property defines what is usually called a (general) probabilistic model when the channel hypergraph is viewed as a contextuality scenario \cite{AFL15, Kunjwal20}.

\subsection{One-shot success probability of a KS channel under the CIG constraint: contextuality and  quantum advantage}
\begin{figure*}
	\centering
	\includegraphics[scale=0.15]{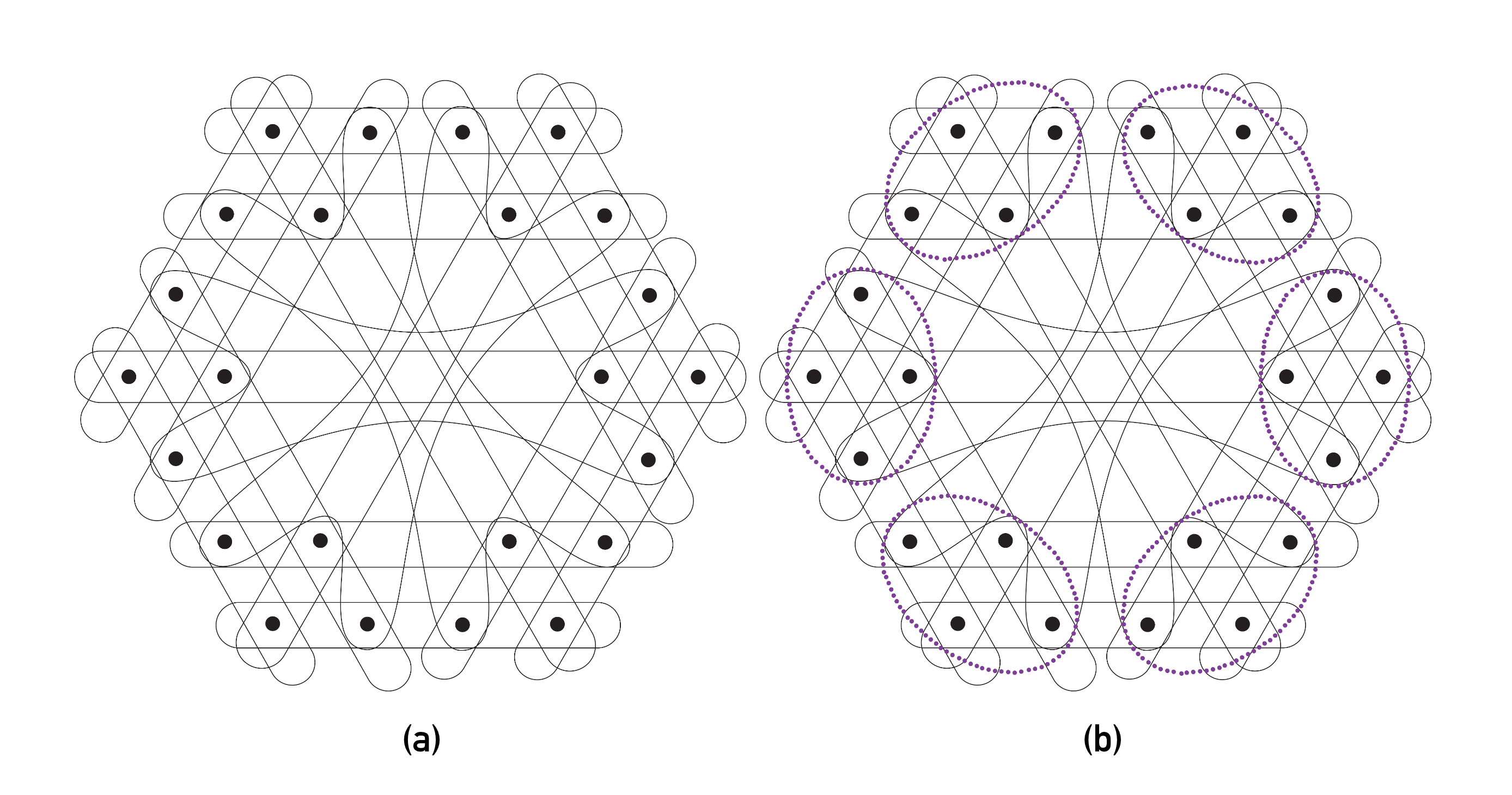}
	\caption{(a) The channel hypergraph considered in Ref.~\cite{CLM10}, and (b) the encoding used, highlighed with dotted hyperedges. Each message,  corresponding to a dotted loop, contains four mutually confusable input symbols.}
	\label{kschannel}
\end{figure*}
As our working example, we will consider the same classical channel considered by Cubitt {\em et al.}, cf.~Fig.~\ref{kschannel}. This channel admits a KS set, i.e., a set of projectors on a $4$-dimensional quantum system, each projector associated with a vertex in the channel hypergraph such that each hyperedge constitutes a projective measurement.
\subsubsection{The one-shot success probability with context-independent guessing}
The one-shot success probability is given by 
	\begin{align}
	S=&\sum_mp(m)\sum_xp(x|m)\nonumber\\
	&\sum_y\mathcal{N}(y|x)
	\sum_{x'\in X^{(m)}}p(x'|y,m,x)\nonumber\\
	=&\sum_mp(m)\sum_xp(x|m)\nonumber\\
	&\sum_y\Bigg(\mathcal{N}(y|x)\sum_{x'\in X^{(m)}}p(x'|m,x)\delta(y\in Y_{x'})\Bigg)\nonumber\\
	&\textrm{(using the CIG constraint)}\nonumber\\
	=&\sum_mp(m)\sum_xp(x|m)\nonumber\\
	&\sum_{x'\in X^{(m)}}\Bigg(\sum_y\mathcal{N}(y|x)\delta(y\in Y_{x'})p(x'|m,x)\Bigg).
	\end{align}
For any two symbols $x,x'\in X$, we quantify the confusability of $x$ with respect to $x'$ via the function
\begin{equation}
\eta(x,x')\equiv \sum_y\mathcal{N}(y|x)\delta(y\in Y_{x'}).
\end{equation}
This is the probability that, for input $x$, the channel $\mathcal{N}$ yields an output that could also arise from the input $x'$. Obviously, $\eta(x,x)=1$ for all $x\in X$.

Noting that $p(x|m)p(x'|m,x)=p(x,x'|m)$, we have 
\begin{align}
S&=\sum_mp(m)\sum_{x,x'\in X^{(m)}}p(x,x'|m)\eta(x,x')\label{oneshotcig}\\
&=\sum_mp(m)\sum_{x,x'\in X^{(m)}}\delta_{x,x'}p(x,x'|m)\nonumber\\
&+\sum_mp(m)\sum_{x,x'\in X^{(m)}}(1-\delta_{x,x'})p(x,x'|m)\eta(x,x'),
\end{align}
where we used the fact that $\eta(x,x)=1$. 
Defining
\begin{align}
S_{\rm perf}&\equiv\sum_mp(m)\sum_{x,x'\in X^{(m)}}\delta_{x,x'}p(x,x'|m),\\
S_{\rm imperf}&\equiv\sum_mp(m)\sum_{x,x'\in X^{(m)}}(1-\delta_{x,x'})p(x,x'|m)\eta(x,x'),
\end{align}
we have that $S=S_{\rm perf}+S_{\rm imperf}$. Here $S_{\rm perf}$ denotes the contribution to the success probability from the situation where Bob guesses Alice's input $x$ to the channel exactly (i.e., $x'=x$, and therefore also infers $m$ correctly) and $S_{\rm imperf}$ denotes the remaining contribution to the success probability from the situation where Bob doesn't guess $x$ correctly (i.e., $x'\neq x$) but nevertheless infers $m$ correctly from $x'$ (i.e., $x'\in X^{(m)}$).

Recalling the source-measurement correlation function ${\rm Corr}$ studied in Ref.~\cite{Kunjwal20}, we have
\begin{equation}\label{defncorr}
{\rm Corr}=S_{\rm perf}\equiv \sum_mp(m)\sum_{x,x'\in X^{(m)}}\delta_{x,x'}p(x,x'|m).
\end{equation}
In the context of Ref.~\cite{Kunjwal20}, this correlation function captures how predictable the measurements corresponding to the hyperedges $m$ can be made when one varies over corresponding preparation ensembles (also labelled by $m$) that average to the same mixed state. In the ideal case of projective measurements from Peres' 24-vector KS set, this quantity is $1$, since every measurement is perfectly predictable when the input state is picked from its orthonormal basis and the uniform mixture over input states from any such basis is the maximally mixed state. However, in a noncontextual ontological model, this quantity is a upper bounded by a hypergraph invariant \cite{Kunjwal20} that will turn out to be relevant for the one-shot success probability in the following subsections.

We then have the following bounds on $S$:
\begin{align}\label{sfncorr}
&{\rm Corr}+\eta_{\rm min}(1-{\rm Corr})\nonumber\\
&\leq S={\rm Corr}\nonumber\\
&+\sum_mp(m)\sum_{x,x'\in X^{(m)}}(1-\delta_{x,x'})p(x,x'|m)\eta(x,x')\nonumber\\
&\leq {\rm Corr}+\eta_{\rm max}(1-{\rm Corr}),
\end{align}
where 
\begin{align}
&\eta_{\rm min}\equiv\min_m\min_{x\neq x'\in X^{(m)}}\eta(x,x'),\\
&\eta_{\rm max}\equiv\max_m\max_{x\neq x'\in X^{(m)}}\eta(x,x').
\end{align}
Here $\eta_{\min}$ is the minimum confusability between any two symbols in any  message in the encoding. Similarly, $\eta_{\max}$ is the maximum confusability between any two symbols in any message in the encoding. We have $0<\eta_{\min}\leq\eta_{\max}<1$.

\subsubsection{The assumption of measurement noncontextuality and how any noncontextual strategy satisfies context-independent guessing}
In the Cubitt {\em et al.} strategy, it is assumed that both Alice and Bob have access to specific sets of measurements carved out of a KS set for the channel of Fig.~\ref{kschannel}. Alice's six measurements $\{\mathbb{M}^{\rm (A)}_m\}_{m=1}^6$ correspond to the six dotted hyperedges in Fig.~\ref{kschannel}, while Bob's measurements $\{\mathbb{M}^{\rm (B)}_y\}_{y=1}^{18}$ correspond to partitioning the $24$ projectors into $18$ hyperedges, denoted by solid hyperedges in Fig.~\ref{kschannel}.

In our treatment of the problem, since we want to allow more general choices of measurements, we make two relaxations: 
\begin{enumerate}
	\item Firstly, we do not restrict ourselves to projective measurements on Bob's side, i.e., we allow any set of positive operators (each operator associated to a vertex in the channel hypergraph) that satisfy the requirement that the (solid {\em and} dotted) hyperedges in Fig.~\ref{kschannel} form complete measurements, and
	\item Secondly, although in the Cubitt {\em et al.}~protocol, Alice's measurements are carved out of the same set of positive operators that Bob can implement, and this is clearly the optimal choice of measurements for Alice (yielding $S=1$ for sending six messages, quantumly), we allow that, in general, Alice could associate some {\em other} measurements with the messages she wants to send even if the outcomes of such measurements do not satisfy the operational equivalences implicit in the channel hypergraph of Fig.~\ref{kschannel}. That is, the encoding strategy of Alice could use a set of positive operators for her measurements $\{\mathbb{M}^{\rm (A)}_m\}_{m=1}^6$ that is completely different from the set of positive operators used by Bob for his measurements $\{\mathbb{M}^{\rm (B)}_y\}_{y=1}^{18}$ in the decoding strategy.
\end{enumerate}  

 We mention this to emphasize that our generalization of the Cubitt {\em et al.}~strategy does not rely on identifying the measurement outcomes of Alice and Bob in the way they are identified in the optimal strategy and our use of the assumption of measurement noncontextuality is restricted to Bob's system, i.e., response functions of Bob's measurements. In particular, the positive operators that constitute Bob's measurements can, in principle, be reconfigured to define the six measurements $\{\mathbb{M}^{\rm (B)}_m\}_{m=1}^6$ that it would be optimal for Alice to choose for her encoding measurements $\{\mathbb{M}^{\rm (A)}_m\}_{m=1}^6$. 

In a noncontextual ontological model of Bob's system, the response functions associated with the vertices (labelled by $x\in X$) respect measurement noncontextuality, i.e., for all $x,y_1,y_2$ such that $x\in X_{y_1}\cap X_{y_2}\neq \varnothing$,
\begin{equation}
\xi(x|y_1,\lambda)=\xi(x|y_2,\lambda)\equiv \xi(x|\lambda),\forall\lambda\in\Lambda.
\end{equation}

Now, even though Bob may not implement or have access to the six measurements $\{\mathbb{M}^{\rm (B)}_m\}_{m=1}^6$ that would be optimal for Alice in the protocol, the operational equivalences implicit in the hypergraph of Fig.~\ref{kschannel} indicate that the response functions for $\mathbb{M}^{\rm (B)}_m$ on Bob's system must, under the assumption of measurement noncontextuality, also satisfy
\begin{equation}
\xi(x|m,\lambda)=\xi(x|\lambda),\forall\lambda\in\Lambda,
\end{equation}
for all $x\in X$ and all $m\in {\rm \bf Msg}$. 

Recalling that 
\begin{equation}
	p(x'|y,m,x)=\sum_{\lambda}\xi(x'|y,\lambda)\mu(\lambda|m,x),
\end{equation}
we have that $p(x'|y_1,m,x)=p(x'|y_2,m,x)$ for all $x',y_1,y_2,m,x$ such that $x'\in X_{y_1}\cap X_{y_2}$ (equivalently, $y_1,y_2\in Y_{x'}$), so that the CIG constraint is satisfied by any noncontextual strategy and the one-shot success probability takes the form of Eq.~\eqref{oneshotcig}.

\subsubsection{Upper bound on the one-shot success probability from preparation and measurement noncontextuality}
We now proceed to upper bound the success probability under the assumption of noncontextuality, i.e., preparation and measurement noncontextuality, and obtain
\begin{align}
S\leq &\max_{\lambda}\sum_mp(m)\sum_{x,x'\in X^{(m)}}\xi(x'|m,\lambda)\mu(x|m,\lambda)\eta(x,x')\nonumber\\
\equiv &S^{\rm max}_{\rm NC}.
\end{align}
To see how this comes about starting from the expression for the one-shot success probability in Eq.~\eqref{oneshotcig}, i.e.,
\begin{align}
S=\sum_mp(m)\sum_{x,x'\in X^{(m)}}p(x,x'|m)\eta(x,x'),\nonumber
\end{align}
first note that $p(x,x'|m)=\sum_{\lambda}\xi(x'|m,\lambda)\mu(\lambda,x|m)$, where $\mu(\lambda,x|m)=\mu(\lambda|m,x)p(x|m)=\mu(x|m,\lambda)\mu(\lambda|m)$. As in the general case of one-shot classical communication assisted by entanglement, given the operational equivalence of ensembles prepared on Bob's side by Alice's measurements $\mathbb{M}^{\rm (A)}_m$ and the assumption of preparation noncontextuality, we have that $\mu(\lambda|m)=\nu(\lambda)$ 
for all $m$. We then have
\begin{align}\label{sontlmodel}
S
=&\sum_mp(m)\nonumber\\
&\sum_{x,x'\in X^{(m)}}\sum_{\lambda}\xi(x'|m,\lambda)\mu(x|m,\lambda)\nu(\lambda)\eta(x,x')\nonumber\\
=&\sum_{\lambda}\nu(\lambda)\sum_mp(m)\nonumber\\
&\sum_{x,x'\in X^{(m)}}\xi(x'|m,\lambda)\mu(x|m,\lambda)\eta(x,x')\nonumber\\
\leq&
S^{\max}_{\rm NC}.
\end{align}

\subsubsection{Contextuality drives the quantum advantage}
We show that the one-shot success probability achievable via any noncontextual strategy is no better than best classical strategy with context-independent guessing. For any extremal classical strategy $i\in \mathcal{I}$ ($\mathcal{I}$ being the set of extremal classical strategies satisfying CIG), the success probability is given by 
\begin{align}
&S_{\rm Cl}(i)\nonumber\\
=&\sum_mp(m)\sum_xp_{\rm A}(x|m,i)\sum_y\mathcal{N}(y|x)\sum_{x'\in X^{(m)}}p_{\rm B}(x'|y,i)\nonumber\\
=&\sum_mp(m)\sum_xp_{\rm A}(x|m,i)\sum_y\mathcal{N}(y|x)\nonumber\\
&\sum_{x'\in X^{(m)}}p_{\rm B}(x'|i)\delta(x'\in X_y),
\end{align}
where $p_{\rm B}(x'|y,m,x,i)=p_{\rm B}(x'|y,i)=p_{\rm B}(x'|i)\delta(x'\in X_y)$, since Bob has no access to any information from Alice besides the shared variable $i$ (denoting the strategy both of them agree to implement) and the channel output $y$, the latter specifying a confusable set $X_y$ containing Alice's input $x$ (and Bob's guess $x'$). An arbitrary classical strategy can then be represented by a convex mixture of extremal classical strategies according to some probability distribution $\{p(i)\}_{i\in\mathcal{I}}$ and the classical success probability is then given by

\begin{align}
&S_{\rm Cl (CIG)}=\sum_{i} p(i)S_{\rm Cl(CIG)}(i)
\leq\max_{i}S_{\rm Cl(CIG)}(i)\nonumber\\
=&\max_{i}\sum_mp(m)\sum_xp_{\rm A}(x|m,i)\sum_y\mathcal{N}(y|x)\nonumber\\
&\sum_{x'\in X^{(m)}}p_{\rm B}(x'|y,i)\nonumber\\
\equiv& S^{\max}_{\rm Cl(CIG)}.
\end{align}
Using the CIG constraint, we have $p_{\rm B}(x'|y,i)=p_{\rm B}(x'|i)\delta(y\in Y_{x'})$, where any extremal classical strategy $i$ specifies a particular extremal probabilistic model on the channel hypergraph (viewed as a contextuality scenario \cite{AFL15}) in Fig.~\ref{kschannel}(a). This allows us to express the maximal classical success probability satisfying CIG as
\begin{align}
&S^{\max}_{\rm Cl(CIG)}\nonumber\\
=&\max_{i}\sum_mp(m)\sum_xp_{\rm A}(x|m,i)\sum_{x'\in X^{(m)}}\Bigg(p_{\rm B}(x'|i)\nonumber\\
&\sum_y\mathcal{N}(y|x)\delta(y\in Y_{x'})\Bigg)\nonumber\\
=&\max_{i}\sum_mp(m)\sum_{x,x'\in X^{(m)}}p_{\rm A}(x|m,i)p_{\rm B}(x'|i)\eta(x,x')\nonumber\\
\end{align}
We therefore have
\begin{align}
S^{\max}_{\rm NC}=S^{\max}_{\rm Cl(CIG)}.
\end{align}
Since any classical strategy is a convex mixture of extremal classical strategies, the upper bound $S^{\max}_{\rm Cl(CIG)}$ can always be achieved by a classical strategy, i.e., there exists an extremal classical strategy $i_*\in\mathcal{I}$ such that $S_{\rm Cl(CIG)}(i_*)=S^{\max}_{\rm Cl(CIG)}$. Similarly, the upper bound $S^{\max}_{\rm NC}$ can be saturated by a noncontextual strategy, albeit a very trivial one, following a similar reasoning as at the end of Section \ref{prepctxnecc} (except that the extremal response functions here are indeterministic on account of KS-uncolourability). Thus, we have that contextuality also drives the quantum advantage in one-shot classical communication when Alice and Bob trust the channel hypergraph but make no assumptions about the channel probabilities.\footnote{Note that, under the CIG constraint, we no longer have the exact correspondence with nonlocal games exemplified by Theorem \ref{isomorphism}. This is because the connection between prepare-and-measure scenarios on Bob's system alone and Bell scenarios where Bob is one of the parties in a two-party Bell experiment breaks down when one imposes, besides preparation noncontextuality, the assumption of measurement noncontextuality on Bob's system (based on the operational equivalences between his local measurement events). This restricts the scope of response functions for Bob's measurements beyond anything required by local causality (under which no restriction on local response functions is imposed). The interested reader may look at a discussion of this point at the end of Section $2.7$ in Ref.~\cite{Kunjwal19}.}

\subsubsection{Contextuality witnessed by a hypergraph-invariant -- the weighted max-predictability -- is sufficient for a quantum advantage}

In this section we point out an explicit connection between contextuality that can be witnessed via the noise-robust noncontextuality inequalities of Ref.~\cite{Kunjwal20} and the quantum advantage in the one-shot classical communication task with context-independent guessing. The full technical argument supporting the claims in this section is presented in Appendix \ref{hypergraphinvariant}.

In Appendix \ref{hypergraphinvariant}, we first show that 
\begin{align}
	S\leq S^{\max}_{\rm NC}\leq \eta_{\max}+(1-\eta_{\max})\beta(\Gamma,\{p(m)\}_m),
\end{align}
where 
\begin{align}
\beta(\Gamma,\{p(m)\}_m)\equiv\max_{\lambda} \sum_mp(m)\max_{x\in X^{(m)}}\xi(x|m,\lambda)
\end{align}
is the weighted max-predictability, a hypergraph invariant that was defined in Ref.~\cite{Kunjwal19} and studied extensively in Ref.~\cite{Kunjwal20}, appearing in the upper bounds of noise-robust noncontextuality inequalities proposed in the frameworks of Refs.~\cite{Kunjwal19,Kunjwal20}. Here, $\Gamma$ is the hypergraph defined by the contextuality scenario from which the measurements of Alice and Bob are drawn in the ideal Cubitt {\em et al.}~strategy \cite{CLM10}, i.e., Fig.~\ref{kschannel}(b) including all the hyperedges (solid and dotted).

We then consider the special case $\eta_{\max}=\eta_{\min}=\eta$ and show that firstly, from Eq.~\eqref{sfncorr}, we have 

\begin{align}
S&={\rm Corr}+\eta (1-{\rm Corr})=\eta+{\rm Corr}(1-\eta).\label{etacorrbound}
\end{align}
We then have that
\begin{align}
{\rm Corr}&>\beta(\Gamma,\{p(m)\}_m)\nonumber\\
\Leftrightarrow S&>\eta+(1-\eta)\beta(\Gamma,\{p(m)\}_m)\nonumber\\
&\geq S^{\max}_{\rm NC}=S^{\max}_{\rm Cl(CIG)}\nonumber\\
\Rightarrow S&>S^{\max}_{\rm NC}.
\end{align}

Now, 
\begin{align}\label{ncineq1}
{\rm Corr}\leq \beta(\Gamma,\{p(m)\}_m)
\end{align}
is an instance of a noise-robust noncontextuality inequality following the approach of Ref.~\cite{Kunjwal20}, inspired by logical proofs of KS-contextuality \cite{KS67,KS15}. Hence, the contextuality witnessed by ${\rm Corr}>\beta(\Gamma,\{p(m)\}_m)$ is sufficient for a quantum advantage in this task when $\eta_{\max}=\eta_{\min}$.\footnote{Note that 
when $\eta_{\min}=\eta_{\max}\equiv \eta$, all pairs of distinct input symbols are equally confusable, i.e., $\eta(x,x')=\sum_y\mathcal{N}(y|x)\delta(y\in Y_{x'})$ is constant across all $x,x'\in X$.} Output-uniform channels with $k$-regular hypergraphs have $\eta(x,x')=\frac{1}{k}\sum_y\delta(y\in Y_x\cap Y_{x'})$ and this quantity is independent of $x,x'$ ($x\neq x'$) if and only if $|Y_x\cap Y_{x'}|$ is constant for all $x\neq x'$, i.e., the number of hyperedges shared by any two confusable vertices of the channel hypergraph is constant across all pairs of confusable vertices. In the channel hypergraph of Fig.~\ref{kschannel}, for example, we have $|Y_x\cap Y_{x'}|=1$ for all confusable pairs of vertices $x,x'$, so the classical channel studied in Ref.~\cite{CLM10} (where $k=3$ and $\mathcal{N}(y|x)=\frac{1}{3}$ for all 
$y\in Y_x, x\in X$) satisfies the condition required for ${\rm Corr}>\beta(\Gamma,\{p(m)\}_m)$ to imply a quantum advantage. The case ${\rm Corr}=1$ corresponds to the situation studied in Ref.~\cite{CLM10}. We have shown that this quantum advantage can persist even when ${\rm Corr}<1$, i.e., in the regime of noisy measurements.

Another special case concerns the situation where $S_{\rm imperf}=0$. In this situation too, the violation of ${\rm Corr}\leq \beta(\Gamma,\{p(m)\}_m)$ implies a quantum advantage (cf.~Appendix \ref{hypergraphinvariant}). Using the set-up in Ref.~\cite{CLM10}, one can achieve ${\rm Corr}=1$, i.e., zero-error communication.

We have thus provided an instance of an information-theoretic task where the noise-robust signatures of contextuality \`a la Refs.~\cite{KS15,Kunjwal20} witness a quantum advantage in the task. This provides an operational meaning to noise-robust noncontextuality inequalities of the type in Eq.~\eqref{ncineq1} that were proposed in Ref.~\cite{Kunjwal20}.

\subsection{KS basis sets and the ideal Cubitt {\em et al.}~strategy}
Can one use the strategy of Ref.~\cite{CLM10} starting from {\em any} KS set of vectors?\footnote{Recall that the general protocol of Section \ref{sec3_1} does not rely on the existence of KS sets. It's only the particular strategy of Ref.~\cite{CLM10} that makes use of them.} The strategy requires not merely a KS set -- namely, a set of vectors with orthogonality relations represented by a KS-uncolourable hypergraph -- but, in fact, a {\em KS basis set}, i.e., a set of \textit{disjoint} complete orthogonal bases $\mathcal{Z}\equiv\{\mathcal{B}_m\}_{m=1}^q$ such that
it is impossible to pick a vector from each basis ensuring that no two are orthogonal. Clearly, the vectors in a KS basis set constitute a KS set. Denoting the vectors in basis $\mathcal{B}_m$ as $\{\psi_{mj}\}_{j=1}^d$, where $d$ is the dimension of the Hilbert space spanned by the basis, we construct a classical channel $\mathcal{N}$ with inputs labelled by $\{(m,j)|m\in [q], j\in [d]\}$ ($[N]$ denoting $\{1,2,\dots,N\}$ for any positive integer $N$). The confusability graph, $G(\mathcal{N})$, of the channel is such that two inputs are confusable if and only if the corresponding vectors are orthogonal. The definition of a KS basis set then implies that $\alpha(G(\mathcal{N}))<q$ for any channel thus constructed from it. 

As we have noted, the set of vectors appearing in any KS basis set form a KS set. However, it is not {\em a priori} obvious that, given a KS set, it is always possible to carve it up into a KS basis set $\{\mathcal{B}_m\}_{m=1}^q$ such that $q>\alpha(G(\mathcal{N}))$ for any channel constructed from it following the prescription of Cubitt {\em et al.} \cite{CLM10} mentioned above.\footnote{In Ref.~\cite{CLM10}, there is a claim that the existence of KS basis sets is ``a corollary of the KS theorem". This is true if a KS basis set is allowed, in general, to contain bases that share vectors. However, for the Cubitt {\em et al.}~construction to work, the bases in a {\em KS basis set} must be disjoint, i.e., no vectors are shared between bases: this is what allows Alice to encode her messages unambiguously in the outcomes of these measurement bases. Further, for an advantage, the number of these disjoint bases in a KS basis set must exceed the independence number of the confusability graph of the channel constructed from the KS basis set. Hence, in our definition of a {\em KS basis set}, we explicitly include the disjointness of bases, something Ref.~\cite{CLM10} implicitly assumed.} Hence, to answer whether the entanglement-assisted enhancement of the one-shot zero-error capacity achieved by the strategy of Ref.~\cite{CLM10} carries through for any KS set, we need to settle the following question:

{\em Does every KS set admit a KS basis set of size $q>\alpha(O(\Gamma))$? Here $\Gamma$ is the contextuality scenario corresponding to the KS set and $O(\Gamma)$ is the orthogonality graph of $\Gamma$.\cite{Kunjwal19,Kunjwal20}}

If the answer is in the affirmative, then the Cubitt {\em et al.}~strategy can be used starting from arbitary KS sets. If not, then there must exist a counter-example. Indeed, we can find such a counter-example and, therefore, the Cubitt {\em et al.}~strategy is not applicable to arbitrary KS sets: it only works for KS sets that admit (disjoint) KS basis sets. Our counter-example comes from the Conway-Kochen $31$-vector KS set, the smallest known KS set in dimension $d=3$ \cite{Peres06}. We refer to Appendix \ref{conway31} for details.

This raises the following important open problem: {\em Given an arbitrary KS set, what are the necessary and sufficient criteria for it to admit a KS basis set?}

\section{Discussion and outlook}\label{sec8}

We have generalized and unified the protocols of Refs.~\cite{CLM10, PLM11} in a broad framework for entanglement-assisted one-shot classical communication that should prove useful for future investigations. Our results bear witness to the role that noise-robust contextuality \`a la Spekkens \cite{Spekkens05} plays in this task. Indeed, the problem of entanglement-assisted one-shot classical communication provides a fertile ground to study the rich interplay between the Kochen-Specker theorem \cite{KS67}, Spekkens contextuality \cite{Spekkens05} and its hypergraph-theoretic formulations \cite{Kunjwal19,Kunjwal20}, and nonlocal games.  Several open questions and opportunities for future work arise:

\begin{enumerate}
	
	\item Does there exist a generic construction of a nonlocal game, instantiating Proposition \ref{nonlocalgamexistence}, for any channel $\mathcal{N}$ that admits an enhancement of its one-shot success probability? Even extending the family of channels for which such a construction exists beyond the case we have shown, i.e., the family of output-uniform $k$-regular channels of Theorem \ref{isomorphism}, would constitute progress in this direction. Furthermore, even within this family of channels, an important question is to characterize those for which the corresponding nonlocal game admits a gap between classical and quantum/nonsignalling correlations. These channels would, in turn, admit an advantage in a corresponding one-shot communication task because of Theorem \ref{isomorphism}.
	
	\item While the channels considered in Refs.~\cite{CLM10, PLM11} are KS channels, it remains an open question whether more general channels (in particular, with KS-colourable channel hypergraphs) exhibit non-trivial advantages in enhancing the one-shot success probability using entanglement. Our general protocol in Section \ref{sec3} does not specifically rely on the channel being KS-uncolourable. For example, a simple channel corresponding to a statistical proof of KS-contextuality is the one based on the KCBS construction on a qutrit \cite{KCBS}. It consists of $10$ vertices, denoted $\{v_i,w_i\}_{i=1}^5$, and $5$ hyperedges, denoted $\{v_i,w_i,v_{i+1}\}_{i=1}^5$ (addition modulo $5$, so that $i+1=1$ for $i=5$), with $\alpha(G(\mathcal{N}))=3$ (equal to its one-shot zero-error capacity); a natural question then arises: can entanglement be used to enhance the one-shot sucess probability of sending two bits using this channel? What would be the role of noise-robust contextuality \`a la Refs.~\cite{KS18,Kunjwal19} in enabling such an enhancement? If not, can {\em any} channel with a KS-colourable hypergraph admit enhancement of its one-shot success probability?
	
	\item The Cubitt {\em et al.} protocol \cite{CLM10} requires the existence of (disjoint) KS basis sets. Is it possible to modify this protocol to use KS sets which do not admit (disjoint) KS basis sets, e.g., the Conway-Kochen 31-vector KS set? Would such a modification still allow for the possibility of enhancing the one-shot zero-error capacity of a classical channel? Or would it, maybe, only allow for an enhancement of the one-shot success probability following the general protocol we discussed in Section \ref{sec3}? The existence of pseudotelepathy games based on KS sets \cite{BBT05} suggests that a quantum advantage in some corresponding one-shot communication task (following Theorem \ref{isomorphism}) should be possible. A related question is: what is the simplest scenario that admits enhancement of the one-shot zero-error capacity of a classical channel? Is the example studied in Ref.~\cite{CLM10} the simplest one, or is it possible to further reduce, say, the size of the input alphabet or the dimension of the quantum system for which the enhancement is achieved? Of course, insofar as one uses KS sets to achieve this enhancement, this is also related to the smallest possible KS sets: in dimension $3$, it's been shown that the smallest KS set can have no fewer than $22$ vectors \cite{UW16, Arends09} (the $31$-vector Conway-Kochen construction still being the smallest one known in $3$ dimensions). Is there a smaller KS set (fewer than $24$ vectors) than Peres's 24-vector set that also admits a KS basis set?

\item The enhancement of the one-shot success probability using the Cubitt {\em et al.}~strategy relies on the orthogonality (compatibility) relations between projectors (projective measurements). Are there nontrivial examples of enhancement of the one-shot success probability that are only achievable with nonprojective measurements, perhaps inspired by joint measurability structures that lie outside the purview of projective measurements \cite{KHF14, AK20}? We know that there exist such joint measurability structures, e.g., Specker's scenario, admitting proofs of contextuality \cite{KG14, ZCL17}.

\item The connection of the one-shot communication task with preparation contextuality could also be leveraged to obtain bounds on inaccessible information in preparation contextual ontological models of quantum theory, following the ideas recently proposed in Ref.~\cite{Marvian20}.
\end{enumerate}

More generally, the problem of entanglement-assisted one-shot zero-error communication can be viewed as a channel simulation problem, i.e., using a noisy channel to simulate a noiseless channel in a one-shot setting using nonsignalling correlations \cite{CLM11}. The relaxation of it to the case of enhancing the one-shot success probability (which we have studied) can be viewed as using a noisy channel to simulate a less noisy channel using nonsignalling correlations, i.e., noise-attenuation of a classical channel using a nonclassical common-cause resource \cite{WSS20}. We have focussed in this paper on the interplay of this latter channel simulation problem with the contextuality of the system that the receiver (Bob) holds in the communication task. A worthwhile project here is a rigorous resource-theoretic account of this problem to better understand how various resources affect the simulation preorder over classical channels in this noisy setting \cite{Shannon58}: whether perhaps the resource of LOSR-entanglement \cite{SFK20} is more appropriate than LOCC-entanglement when viewing the resource aspects of entanglement (and how this affects, for example, the usefulness of Tsirelson boxes vs.~Hardy boxes \cite{Hardy93,WSS20,SFK20} in this task), how the resource of noise-robust contextuality on one wing of a Bell experiment plays with bipartite nonlocality, and, more abstractly, the usefulness of a common-cause resource in simulating a direct-cause resource (e.g., the fact that entanglement can increase the one-shot zero-error capacity of a classical channel \cite{CLM10}). It would also be interesting to see if the contextuality witnesses we have considered in this paper turn out to be related to some monotones for channel (non-)conversions in a resource theory of channel simulation.

\ack
S.A.Y. is supported by the Richardson Endowment in the Department of Physics, Duke University, and Melvin J. Rieger Scholarship Fund in Physics and Robert E. Boyer Endowed Presidential Scholarship for Natural Sciences in the Department of Physics, University of Texas at Austin. S.A.Y. was also supported by resources and people in and around the Kodosky Reading room, Department of Physics, University of Texas at Austin. S.A.Y. acknowledges Scott Aaronson (UT Austin) for mentoring related undergraduate thesis work, Iman Marvian and Austin Hulse (Duke) for useful discussions, and Saathwik Yadavalli for helping with figures. R.K.~is supported by the Charg\'e de Recherche fellowship of the Fonds de la Recherche Scientifique - FNRS (F.R.S.-FNRS), Belgium. R.K. would like to thank Matt Pusey for raising the issue of the role of nonlocality in the early phase of this project when some of these ideas were presented in a ``Lobster Lunch" talk at Oxford. We also thank Rob Spekkens for some early conversations on this topic when this project was just getting started. We acknowledge support from the Perimeter Institute for Theoretical Physics through their Undergraduate Summer Research Program, 2018, which led to this collaboration. Research at Perimeter Institute is supported by the Government of Canada through
the Department of Innovation, Science and Economic
Development Canada and by the Province of Ontario
through the Ministry of Research, Innovation and Science.

\bibliographystyle{apsrev4-2}
\nocite{apsrev41Control}
\bibliography{masterbibfile}

\begin{appendix}
\section{Preparation noncontextuality on one wing of a bipartite Bell experiment implies local causality}\label{pncimplieslc}
We will use quantum notation below for ease of understanding, but the argument applies to all non-signalling general probabilistic theories (GPTs). 

Consider a general bipartite Bell scenario where Alice's measurement settings are labelled by $s$, Bob's settings are labelled by $t$, and their respective outcomes are labelled by $a$ and $b$. Their joint statistics is, therefore, given by $p(a,b|s,t)=\Tr (E^{(s)}_a\otimes E^{(t)}_b\rho_{\rm AB})$, where $\{E^{(s)}_a\}_a$ denotes the POVM associated with $s$, $\{E^{(t)}_b\}_b$ denotes the POVM associated with $t$, and $\rho_{\rm AB}$ is the entangled state shared between Alice and Bob. We now consider the prepare-and-measure experiment on Bob's side\footnote{The same argument goes through with the roles of Alice and Bob interchanged.} that this Bell scenario induces: Bob's preparations are steered by Alice's measurements, i.e., every measurement outcome $E^{(s)}_a$ on Alice's side steers Bob's system to (an unnormalized state) $\sigma_{a|s}=\Tr_{\rm A}(E^{(s)}_a\otimes I\rho_{\rm AB})$. However, no-signalling requires that Bob should not be able to infer Alice's measurement setting $s$ by local interventions on his system alone, so that we have $\sum_a\sigma_{a|s}=\rho_{\rm B}$ for all measurement settings $s$ that Alice can choose. Each $s$ therefore labels a preparation ensemble $\{p(a|s),\rho_{a|s}\}_a$ on Bob's side such that $\sum_ap(a|s)\rho_{a|s}=\rho_{\rm B}$ for all $s$, where $p(a|s)=\Tr_B\sigma_{a|s}$ and $\rho_{a|s}=\frac{\sigma_{a|s}}{p(a|s)}$. Given this operational equivalence between the preparation ensembles on Bob's side, the assumption of preparation noncontextuality entails that any ontological model of Bob's system must satisfy $\sum_ap(a|s)p(\lambda|s,a)=p(\lambda)$ for all $s$. This can be rewritten as $\sum_ap(a|s,\lambda)p(\lambda|s)=p(\lambda)$ for all $s$, i.e., $p(\lambda|s)=p(\lambda)$ for all $s$. We then have, given Bob's measurement outcomes $E^{(t)}_b$, that
\begin{align}
p(b|t,s,a)=\Tr_B (E^{(t)}_b\rho_{a|s})=\sum_{\lambda}p(b|t,\lambda)p(\lambda|s,a),
\end{align}
and 
\begin{align}
&p(a,b|s,t)\nonumber\\
=&p(a|s)p(b|t,s,a)\nonumber\\
=&\sum_{\lambda}p(b|t,\lambda)p(a|s)p(\lambda|s,a)\nonumber\\
=&\sum_{\lambda}p(b|t,\lambda)p(a|s,\lambda)p(\lambda|s)\nonumber\\
=&\sum_{\lambda}p(b|t,\lambda)p(a|s,\lambda)p(\lambda),
\end{align}
where the last equality follows from the assumption of preparation noncontextuality. Thus, the existence of a preparation noncontextual ontological model for Bob's system (or for Alice's system, by symmetry) implies the existence of locally causal ontological model for the bipartite Bell experiment. Note that 
\begin{align}
&p(a,b|s,t)\nonumber\\
=&\Tr(E^{(s)}_a\otimes E^{(t)}_b\rho_{\rm AB})\nonumber\\
=&\Tr(E^{(s)}_a\otimes I \rho_{\rm AB})\Tr_{\rm B}(E^{(t)}_b\rho_{a|s}),
\end{align}
where $p(a|s)=\Tr(E^{(s)}_a\otimes I \rho_{\rm AB})$ and $p(b|t,s,a)=\Tr_{\rm B}(E^{(t)}_b\rho_{a|s})$.

\section{Proof of Theorem \ref{isomorphism}}\label{proofisomorphism}

	We begin by noting that, following Eq.~\eqref{localbound}, we have 
	\begin{align}
		&S_{\rm Cl}^{\max}\nonumber\\
		=&\max_{p(x,m'|m,y)\in\mathcal{L}}S\nonumber\\
		=&\frac{1}{|{\rm \bf Msg}|}\frac{1}{k}\max_{p(x,m'|m,y)\in\mathcal{L}}\sum_{m,m',x,y\in Y_x}\delta_{m,m'}p(x,m'|m,y).
	\end{align}
	The Bell expression of Eq.~\eqref{bellexp} can be rewritten by making the substitution (recalling that $p(y)=\frac{1}{|Y|}$)
	\begin{align}
		\sum_{y\notin Y_x}p(y)=1-\sum_{y\in Y_x}p(y)=1-\frac{k}{|Y|},
	\end{align}
	and using the no-signalling condition
	\begin{align}
		\sum_{m'}p(x,m'|m,y)=p(x|m),\textrm{ for all }y\in Y,
	\end{align}
	to express the second term of Eq.~\eqref{bellexp} as follows:
	\begin{align}
		&\frac{1}{|{\rm \bf Msg}|}\sum_{m,m',x,y\notin Y_x}\frac{1}{|Y|}p(x,m'|m,y)\nonumber\\
		=&\frac{1}{|{\rm \bf Msg}|}\sum_{m,x,y\notin Y_x}\frac{1}{|Y|}p(x|m)\nonumber\\
		=&\frac{1}{|{\rm \bf Msg}|}|{\rm \bf Msg}|\left(1-\frac{k}{|Y|}\right)\nonumber\\
		=&1-\frac{k}{|Y|}.
	\end{align}
	The Bell expression then becomes
	\begin{align}
		&S_{\rm Bell}\nonumber\\
		=&1-\frac{k}{|Y|}+\frac{1}{|{\rm \bf Msg}|}\frac{1}{|Y|}\sum_{m,m',x,y\in Y_x}\delta_{m,m'}p(x,m'|m,y)\nonumber\\
		=&1-\frac{k}{|Y|}+\frac{k}{|Y|}S,\label{bellpnc}
	\end{align}
	using Eq.~\eqref{oneshotexp}. This means that the following holds:
	\begin{equation}
		S=1\Leftrightarrow S_{\rm Bell}=1,
	\end{equation}
	i.e., the one-shot zero-error communication occurs if and only if the corresponding nonlocal game is won with certainty. On the other hand, this also means that 
	\begin{align}
		&S_{\rm local}^{\max}=\max_{p(x,m'|m,y)\in\mathcal{L}}S_{\rm Bell}\nonumber\\
		=&1-\frac{k}{|Y|}+\frac{k}{|Y|}\max_{p(x,m'|m,y)\in\mathcal{L}}S\nonumber\\
		=&1-\frac{k}{|Y|}+\frac{k}{|Y|}S_{\rm Cl}^{\max},
	\end{align}
	so that we finally have
	\begin{align}
		S>S_{\rm Cl}^{\max}\Leftrightarrow S_{\rm Bell}>S_{\rm local}^{\max}.
	\end{align}

\section{The one-shot success probability in the Prevedel {\em et al.}~protocol}\label{prevedelsuccessprobability}
The expression for the one-shot success probability in Eq.~\eqref{prevedelexpr} can be obtained as follows: starting from the general expression for $S$, we have
\begin{align}
S=&\sum_mp(m)\sum_xp(x|m)\sum_y\mathcal{N}(y|x)\sum_vp(v|y)\nonumber\\
&\sum_zp(z|v,m,x)p(m'=m|z,y)\\
=&\sum_mp(m)\sum_xp(x|m)\sum_y\mathcal{N}(y|x)\sum_vp(v|y)\nonumber\\
&\sum_zp(z|v,m,x)\delta_{g(z,y),m}.
\end{align}
This then becomes
\begin{align}
S=&\sum_mp(m)\sum_xp(x|m)\mathcal{N}(y=(1,b_1)|x=(b_1,b_2))\nonumber\\
&\sum_vp(v|y)\sum_zp(z|v,m,x)\delta_{b_1,m}\nonumber\\
+&\sum_mp(m)\sum_xp(x|m)\mathcal{N}(y=(2,b_2)|x=(b_1,b_2))\nonumber\\
&\sum_v\delta_{v,1}\sum_zp(z|v,m,x)\delta_{b_2\oplus z,m}\nonumber\\
+&\sum_mp(m)\sum_xp(x|m)\mathcal{N}(y=(P,b_1\oplus b_2)|x=(b_1,b_2))\nonumber\\
&\sum_v\delta_{v,0}\sum_zp(z|v,m,x)\delta_{b_1\oplus b_2\oplus z,m}\nonumber\\
=&\sum_mp(m)\sum_{b_2}p(x=(m,b_2)|m)\nonumber\\
&\mathcal{N}(y=(1,m)|x=(m,b_2))\nonumber\\
+&\sum_mp(m)\sum_xp(x|m)\mathcal{N}(y=(2,b_2)|x)\nonumber\\
&\sum_zp(z|v=1,m,x)\delta_{b_2\oplus z,m}\nonumber\\
+&\sum_mp(m)\sum_xp(x|m)\mathcal{N}(y=(P,b_1\oplus b_2)|x)\nonumber\\
&\sum_zp(z|v=0,m,x)\delta_{b_1\oplus b_2\oplus z,m}.
\end{align}

Finally, for $p(m)=\frac{1}{2}$, we have 
\begin{align}
S=&\sum_m\frac{1}{2}\sum_{b_2}p(x=(m,b_2)|m)\frac{1}{3}\nonumber\\
+&\sum_m\frac{1}{2}\sum_{x}p(x|m)\frac{1}{3}\sum_zp(z|v=1,m,x)\delta_{b_2\oplus z,m}\nonumber\\
+&\sum_m\frac{1}{2}\sum_xp(x|m)\frac{1}{3}\sum_zp(z|v=0,m,x)\delta_{m\oplus b_2\oplus z,m}\nonumber\\
=&\frac{1}{3}+\frac{1}{6}\sum_m\sum_{x}p(x|m)\sum_zp(z|v=1,m,x)\delta_{b_2\oplus z,m}\nonumber\\
+&\frac{1}{6}\sum_m\sum_xp(x|m)\sum_zp(z|v=0,m,x)\delta_{b_2\oplus z,0}\nonumber\\
=&\frac{1}{3}+\frac{1}{6}\sum_m\sum_{x}\sum_zp(x,z|m,v=1)\delta_{b_2\oplus z,m}\nonumber\\
+&\frac{1}{6}\sum_m\sum_x\sum_zp(x,z|m,v=0)\delta_{b_2\oplus z,0}\nonumber\\
=&\frac{1}{3}+\frac{1}{6}\sum_m\sum_{b_2}\sum_zp(b_2,z|m,v=1)\delta_{b_2\oplus z,m}\nonumber\\
+&\frac{1}{6}\sum_m\sum_{b_2}\sum_zp(b_2,z|m,v=0)\delta_{b_2\oplus z,0}\nonumber\\
=&\frac{1}{3}+\frac{1}{6}\sum_{b_2,z,m,v}p(b_2,z|m,v)\delta_{b_2\oplus z,mv}.
\end{align}

\section{The nonlocal game for the  Prevedel {\em et al.}~protocol following Theorem \ref{isomorphism}}\label{prevedelgame}
The Prevedel {\em et al.}~protocol \cite{PLM11} is evidently built around the CHSH game. How does this square with the general construction of a nonlocal game that we referred to in Theorem \ref{isomorphism}? We do a consistency check here. Following the recipe for constructing a nonlocal game starting from a classical channel $\mathcal{N}$, outlined in Section \ref{sec5}, we have the following expression for the probability of success in the nonlocal game:
\begin{align}
	&S_{\rm Bell}\nonumber\\
	=&\frac{1}{12}\sum_{m,m',x,y\notin Y_x}p(x,m'|m,y)\nonumber\\
	+&\frac{1}{12}\sum_{m,m',x,y\in Y_x}p(x,m'|m,y)\delta_{m,m'},
\end{align} 
where we used $p(m,y)=p(m)p(y)=\frac{1}{2}\frac{1}{6}=\frac{1}{12}$.

The expression for the one-shot success probability, $S$, is, of course, given by (using $p(m)=\frac{1}{2}$ and $\mathcal{N}(y|x)=\frac{1}{3}\delta(y\in Y_x)$)
\begin{align}
	S=\frac{1}{6}\sum_{m,m',x,y\in Y_x}p(x,m'|m,y)\delta_{m,m'}.
\end{align}
Hence, using the fact that $k=3$ and $|Y|=6$ in this example, and following Eq.~\eqref{bellpnc}, we have
\begin{align}
	S_{\rm Bell}=\frac{1}{2}+\frac{1}{2}S.
\end{align}
Recalling Eq.~\eqref{prevedelchsh}, 
\begin{align}
	S=\frac{1}{3}+\frac{2}{3}S_{\rm CHSH},
\end{align}
and we therefore have
\begin{align}
	S_{\rm Bell}=\frac{2}{3}+\frac{1}{3}S_{\rm CHSH}
\end{align}
for the nonlocal game defined according to Section \ref{sec5} and used in Theorem \ref{isomorphism}. Hence, we have that the nonlocal game constructed from our general recipe is essentially the CHSH game, except that the two inputs on Bob's side are disguised as six inputs labelled by $y\in Y$ that are classically post-processed to obtain $v\in\{0,1\}$ and the output for each $y$ is given by $m'\in\{0,1\}$ obtained by classically post-processing $z,y$ (Fig.~\ref{schematic}).

\section{Contextuality witnessed by the weighted max-predictability is sufficient for a quantum advantage}\label{hypergraphinvariant}
Recall the expression for $S$ given in Eq.~\eqref{sfncorr}, i.e.,
\begin{equation}
	S={\rm Corr}+\sum_mp(m)\sum_{x,x'\in X^{(m)}}(1-\delta_{x,x'})p(x,x'|m)\eta(x,x'),
\end{equation} 
where, from Eq.~\eqref{defncorr}, we have
\begin{equation}
	{\rm Corr}=\sum_mp(m)\sum_{x,x'\in X^{(m)}}\delta_{x,x'}p(x,x'|m).
\end{equation}
From Eq.~\eqref{sontlmodel}, we have, under the assumption of noncontextuality, that
\begin{align}
	S=&\sum_{\lambda}\nu(\lambda)\sum_mp(m)\nonumber\\
	&\sum_{x,x'\in X^{(m)}}\xi(x'|m,\lambda)\mu(x|m,\lambda)\eta(x,x'),
\end{align}
so that 
\begin{align}
	{\rm Corr(\lambda)}\equiv \sum_mp(m)\sum_{x,x'\in X^{(m)}}\delta_{x,x'}\xi(x'|m,\lambda)\mu(x|m,\lambda),
\end{align}
and
\begin{equation}
	{\rm Corr}=\sum_{\lambda}{\rm Corr}(\lambda)\nu(\lambda)
\end{equation}

We now proceed to upper bound $S^{\max}_{\rm NC}$ in terms of a hypergraph invariant:
\begin{align}\label{smaxnc}
	&S^{\max}_{\rm NC}\nonumber\\
	=&\max_{\lambda}\Bigg({\rm Corr}(\lambda)+\sum_mp(m)\nonumber\\
	&\sum_{x,x'\in X^{(m)}}\Big((1-\delta_{x,x'})\xi(x'|\mathbb{M}^{\rm (A)}_m,\lambda)\nonumber\\
	&\mu(x|\mathbb{S}_m,\lambda)\eta(x,x')\Big)\Bigg)\nonumber\\
	\leq&\max_{\lambda}\left({\rm Corr}(\lambda)+\eta_{\max}(1-{\rm Corr}(\lambda)\right)\nonumber\\
	=&\max_{\lambda}\left(\eta_{\max}+{\rm Corr}(\lambda)(1-\eta_{\max})\right)\nonumber\\
	=&\eta_{\max}+(1-\eta_{\max})\max_{\lambda}{\rm Corr}(\lambda)\nonumber\\
	\leq& \eta_{\max}+(1-\eta_{\max})\beta(\Gamma,\{p(m)\}_m),
\end{align}
where 
\begin{align}
	\beta(\Gamma,\{p(m)\}_m)\equiv\max_{\lambda} \sum_mp(m)\max_{x\in X^{(m)}}\xi(x|m,\lambda)
\end{align}
is the weighted max-predictability \cite{Kunjwal19,Kunjwal20}. Hence, we have 
\begin{align}
	S\leq S^{\max}_{\rm NC}\leq \eta_{\max}+(1-\eta_{\max})\beta(\Gamma,\{p(m)\}_m).
\end{align}

{\bf A special case:} We now consider the special case when $\eta_{\max}=\eta_{\min}=\eta$. Following Eq.~\eqref{sfncorr}, the lower and upper bounds on $S$ coincide and we have 

\begin{align}
	S&={\rm Corr}+\eta (1-{\rm Corr})\\
	&=\eta+{\rm Corr}(1-\eta).\label{etacorrbound}
\end{align}

The upper bound from noncontextuality becomes (cf.~Eq.~\eqref{smaxnc})
\begin{align}
	S^{\max}_{\rm NC}=\eta+\max_{\lambda}{\rm Corr(\lambda)}(1-\eta).
\end{align}

We then have, from noncontextuality, that
\begin{align}
	S&\leq S^{\max}_{\rm NC}\nonumber\\ 
	&\leq \eta+(1-\eta)\beta(\Gamma,\{p(m)\}_m).
\end{align}

Together with Eq.~\eqref{etacorrbound}, this gives us the following:
\begin{align}
	{\rm Corr}&>\beta(\Gamma,\{p(m)\}_m)\nonumber\\
	\Leftrightarrow S&>\eta+(1-\eta)\beta(\Gamma,\{p(m)\}_m)\nonumber\\
	&\geq S^{\max}_{\rm NC}=S^{\max}_{\rm Cl(CIG)}.
\end{align}

Recalling that 
\begin{align}\label{ncineq}
	{\rm Corr}\leq \beta(\Gamma,\{p(m)\}_m)
\end{align}
is a noise-robust noncontextuality inequality \cite{Kunjwal20}, we have that the contextuality witnessed by ${\rm Corr}>\beta(\Gamma,\{p(m)\}_m)$ is sufficient for a quantum advantage when $\eta_{\max}=\eta_{\min}$.

{\bf Another special case:} The sufficiency of ${\rm Corr}>\beta(\Gamma,\{p(m)\}_m)$ for a quantum advantage also arises when $S_{\rm imperf}=0$. Then we have that $S=S_{\rm perf}={\rm Corr}$ and $S^{\rm max}_{\rm NC}\leq \beta(\Gamma,\{p(m)\}_m)$, so that the violation of ${\rm Corr}\leq \beta(\Gamma,\{p(m)\}_m)$ implies the violation of $S\leq S^{\rm max}_{\rm NC}$. Indeed, in the ideal quantum case considered by Cubitt {\em et al.}\cite{CLM10}, we see that ${\rm Corr}=1$, maximally violating the noncontextuality inequality and achieving a success probability of $1$.

\section{Not every KS set admits a KS basis set: the Conway-Kochen $31$-vector KS set}\label{conway31}

Consider the simplest known KS set in $d=3$ dimensions, namely, the Conway-Kochen $31$-vector KS set \cite{Peres06}. The $31$ vectors are carved up into $17$ complete orthogonal bases (with $3$ vectors each) and $20$ incomplete orthogonal bases (with $2$ vectors each). The orthogonality graph has an independence number of $11$ and the only disjoint basis sets of size greater than $11$ are those of size $12$ and $13$. None of these disjoint basis sets forms a KS basis set, hence no quantum advantage over the unassisted one-shot zero-error capacity of $11$ can be obtained via the methods of Ref.~\cite{CLM10} for this  construction. A remaining possibility is that, on adding the missing vectors in the $20$ incomplete orthogonal bases to the KS set, the orthogonality relations between the resulting set of $51$ vectors will perhaps allow for an advantage. We rule out this possibility as well: after including $20$ additional vectors that render all the bases that appear in this KS set complete, we have a contextuality scenario represented by a hypergraph containing $51$ vertices carved up into $37$ (three-vertex) hyperedges. We check for any  additional orthogonality relations arising from the newly introduced $20$ vectors and find $4$ additional incomplete orthogonal bases. On further completing these $4$ bases by adding $4$ more vectors, we find that there are, overall, $55$ vertices (vectors) carved up into $41$ hyperedges (complete orthogonal bases) and there are no additional orthogonality relations.
The orthogonality graph associated with this extended contextuality scenario has an independence number of $25$. Hence, for an advantage based on the strategy of Ref.~\cite{CLM10}, there must exist a KS basis set of size $q>25$. However, the largest disjoint basis set is still of size $13$ and it does not form a KS basis set. Hence, the strategy of Ref.~\cite{CLM10} does not provide an advantage even (and especially) when extending Conway-Kochen 31-vector KS set to complete all incomplete orthogonal bases and include any additional orthogonality relations (leaving no incomplete orthogonal bases). This provides a counter-example to the question we posed, showing that the Cubitt {\em et al.}~strategy doesn't work for arbitrary KS sets.

One might wonder why we bother ``completing" the original $31$-vector KS set to $55$-vector KS set with no incomplete bases. We do this to rule out the possibility that something akin to the $18$-vector KS set in $4$ dimensions \cite{CEGA96} is happening here: for that KS set, it's not possible to implement the Cubitt {\em et al.}~protocol, but supplementing it with the remaining set of $6$ vectors (out of Peres's $24$-vector KS set \cite{Peres91} from which the $18$-vector set is drawn) and taking into account the resulting additional orthogonality relations yields Peres's $24$-vector KS set for which the Cubitt {\em et al.}~protocol works. From our investigation, it is clear that for the $31$-vector KS set, such a situation doesn't arise even after ``completing" it.

We provide below a list of all the vectors and bases in the orginal as well as the ``completed" KS set for the Conway-Kochen argument, so that the interested reader may verify our claims concerning this KS set.

\subsection{Conway-Kochen $31$-vector KS set}
The $31$ vectors (labelled from $1$ to $31$) are:
\begin{align}
&1: (-1,2,1), 2:(-1,2,0), 3: (0,2,1), 4: (-1,2,-1),\nonumber\\
&5: (0,2,0), 6: (1,2,1), 7: (0,2,-1), 8: (1,2,0),\nonumber\\
&9: (1,2,-1), 10: (0,2,-2), 11: (2,2,0), 12: (2,2,-2),\nonumber\\
&13: (-1,1,-2), 14: (0,1,-2), 15: (-1,0,-2),\nonumber\\
&16: (0,0,-2), 17: (-1,-1,-2), 18: (0,-1,-2),\nonumber\\
&19: (0,-2,-2), 20: (2,1,-1), 21: (2,1,0), \nonumber\\
&22: (2,0,-1), 23: (2,0,0), 24: (2,-1,-1), \nonumber\\
&25: (2,-1,0), 26: (2,-2,0), 27: (2,0,-2), \nonumber\\
&28: (2,-2,-2), 29: (2,2,2), 30: (2,0,2),\nonumber\\
&31: (2,-2,2).
\end{align}

The orthogonality relations of between these vectors are the following (the first entry in each list is the vector with respect to which the remaining vectors in the list are orthogonal):

\begin{align}
&[1; 12, 14, 21, 30], [2; 16, 20, 21], [3; 13, 14, 23], \nonumber\\
&[4; 18, 21, 27,
29],[5; 15, 16, 22, 23, 27, 30],\nonumber\\
&[6; 14, 25, 27, 31], [7; 17, 18, 23], [8; 16, 24, 25],\nonumber\\
&[9; 18, 25, 28, 30], [10; 19, 23, 24, 28, 29],\nonumber\\
&[11; 13,
16, 26, 28, 31], [12; 1, 17, 19, 26, 30],\nonumber\\
&[13; 3, 11, 22, 28], [14; 1,
3, 6, 23], [15; 5, 20, 22, 24],\nonumber\\
&[16; 2, 5, 8, 11, 21, 23, 25, 26], [17;
7, 12, 22, 26],\nonumber\\
&[18; 4, 7, 9, 23], [19; 10, 12, 20, 23, 31], [20; 2, 15,
19, 31],\nonumber\\
&[21; 1, 2, 4, 16], [22; 5, 13, 15, 17],\nonumber\\
&[23; 3, 5, 7, 10, 14,
16, 18, 19], [24; 8, 10, 15, 29],\nonumber\\
&[25; 6, 8, 9, 16], [26; 11, 12, 16,
17, 29],\nonumber\\
&[27; 4, 5, 6, 29, 30, 31], [28; 9, 10, 11, 13, 30],\nonumber\\
&[29; 4, 10,
24, 26, 27], [30; 1, 5, 9, 12, 27, 28],\nonumber\\
&[31; 6, 11, 19, 20, 27].
\end{align}

The $17$ complete orthogonal bases (vectors labelled as above) are:
\begin{align}
&\{1, 12, 30\}, \{2, 16, 21\}, \{3, 14, 23\}, \{4, 27, 29\}, \{5, 15, 22\},\nonumber\\
&\{5, 16, 23\},\{5, 27, 30\}, \{6, 27, 31\}, \{7, 18, 23\}, \{8, 16, 25\},\nonumber\\
&\{9, 28, 30\}, \{10, 19, 23\}, \{10, 24, 29\}, \{11, 13, 28\},\nonumber\\
&\{11, 16, 26\},\{12, 17, 26\}, \{19, 20, 31\}.
\end{align}

The $20$ incomplete orthogonal bases are:
\begin{align}
&\{1, 14\}, \{1, 21\}, \{2, 20\}, \{3, 13\}, \{4, 18\},\nonumber\\
&\{4, 21\}, \{6, 14\}, \{6, 25\}, \{7, 17\}, \{8, 24\},\nonumber\\
&\{9, 18\}, \{9, 25\}, \{10, 28\}, \{11, 31\}, \{12, 19\},\nonumber\\
&\{13, 22\}, \{15, 20\}, \{15, 24\}, \{17, 22\}, \{26, 29\}.
\end{align}

\subsection{The ``completed" $55$-vector KS set}

The $20$ vectors that complete the incomplete bases are: 
\begin{align}
&32: (4, -4, -8), 33: (-5, -1, 2), 34: (-4,-4,8),\nonumber\\ 
&35: (-1, -2, -5), 36: (5, 2, -1), 37: (-1, -5, -2),\nonumber\\
&38: (1, -5, 2), 39: (-1, -2, 5), 40: (-8, -4, -4),\nonumber\\
&41: (5, -1, -2), 42: (-5, 2, -1), 43: (2, 5, -1),\nonumber\\
&44: (-2, 1, -5), 45: (1, -2, -5), 46: (-2, -1, -5),\nonumber\\
&47: (-5, -2, -1), 48: (-5, 2, 1), 49: (8, -4, 4),\nonumber\\
&50: (-1, 2, -5), 51: (-2, 5, 1).
\end{align}

The resulting set of $17$ newly complete bases is then:
\begin{align}
&\{1, 14, 47\}, \{1, 21, 50\}, \{2, 20, 46\}, \{3, 13, 33\}, \{4, 18, 36\},\nonumber\\
&\{4, 21, 45\}, \{6, 14, 48\}, \{6, 25, 39\}, \{7, 17, 41\}, \{8, 24, 44\},\nonumber\\
&\{9, 18, 42\}, \{9, 25, 35\}, \{10, 28, 40\}, \{11, 31, 32\}, \nonumber\\
&\{12, 19, 49\}, \{13, 22, 37\}, \{15, 20, 51\}, \{15, 24, 43\},\nonumber\\
&\{17, 22, 38\}, \{26, 29, 34\}.
\end{align}

Taking into account possible extra orthogonality relations not captured by the set of $37$ complete bases, it turns out that there are $4$ additional incomplete bases in the set of $51$ vectors above:
\begin{align}
\{2, 40\}, \{3, 34\}, \{7, 32\}, \{8, 49\}.
\end{align}
To complete these bases we add $4$ more vectors to the $51$-vector KS set:
\begin{align}
&52: (20, 4, 8), 53: (-20, 4, -8),\nonumber\\
&54: (8, 4, -20), 55: (8, -4, -20).
\end{align}
The $4$ additional newly complete bases are then
\begin{align}
\{2, 40, 54\}, \{3, 34, 53\},  \{7, 32, 52\}, \{8, 49, 55\}.
\end{align}
There are no new orthogonality relations in this completed set of $55$ vectors carved up into $41$ complete orthogonal bases.
\end{appendix}

\end{document}